\newif\iflong
\newif\ifshort
\newcommand{\mytitle}{Optimal Capacity Modification for Many-To-One Matching Problems}
\newcommand{\appendixtitle}{Supplementary Material for the Paper ``\mytitle''}
\title{\mytitle}
\author{Jiehua Chen\textsuperscript{\rm 1} \and Gergely Cs{\'a}ji\textsuperscript{\rm 2}\textsuperscript{\rm 3}}
\date{
\textsuperscript{\rm 1}TU Wien, Austria\\
\textsuperscript{\rm 2}HUN-REN Centre for Economic and Regional Studies, Hungary\\
\textsuperscript{\rm 3}ELTE Eötvös Loránd University, Hungary
}
\newtheorem{lemma}{Lemma}
\newtheorem{example}{Example}
\newtheorem{remark}{Remark}
\newtheorem{theorem}{Theorem}
\newtheorem{claim}{Claim}[theorem]
\newtheorem{corollary}{Corollary}
 \theoremstyle{definition}
\newtheorem{observation}{Observation}
\crefname{table}{Table}{Tables}
\crefname{figure}{Figure}{Figures}
\crefname{theorem}{Theorem}{Theorems}
\crefname{corollary}{Corollary}{Corollaries}
\crefname{observation}{Observation}{Observations}
\crefname{lemma}{Lemma}{Lemmas}
\crefname{example}{Example}{Examples}
\crefname{reduction}{Reduction}{Reductions}
\crefname{construction}{Construction}{Constructions}
\crefname{subsection}{Subsection}{Subsections}
\crefname{section}{Section}{Sections}
\crefname{claim}{Claim}{Claims}
\crefname{clm}{Claim}{Claims}
\crefname{algorithm}{Algorithm}{Algorithm}
\crefname{definition}{Definition}{Definitions}
\newcommand{\decprob}[3]{
   \begin{center}%
     \begin{itemize}[d]
       \item[\textsc{#1}]
       \item[\textbf{Input:}]  #2\\[0.2ex]
       \item[\textbf{Question:}]  #3
     \end{itemize}
  \end{center}
}
\newcommand{\optprob}[3]{
   \begin{center}%
    \begin{minipage}{0.96\linewidth}%
      \textsc{#1}\\[0.2ex]
      \textbf{Input:} #2\\[0.2ex]
      \textbf{Task:} #3
    \end{minipage}%
  \end{center}
}
\newcommand{\rrsat}{\textsc{(2,2)\nobreakdash-$3$SAT}\xspace}
\newcommand{\MCC}{\textsc{Mul\-ti-Col\-ored Clique}}
\newcommand{\MCCs}{\textsc{MCC}}
\newcommand{\scov}{\textsc{Set Cover}\xspace}
\newcommand{\VC}{\textsc{Vertex Cover}\xspace}
\newcommand{\manyone}{\textsc{Many-To-One Matching}\xspace}
\newcommand{\manyones}{\textsc{MM}\xspace}
\newcommand{\sumcol}[1]{{\color{red!50!black}#1}}
\newcommand{\maxcol}[1]{{\color{blue!50!black}#1}}
\newcommand{\minsummatching}[1][$\Pi$]{\textsc{Min{\sumcol{Sum}} Cap #1}\xspace}
\newcommand{\minmaxmatching}[1][$\Pi$]{\textsc{Min{\maxcol{Max}} Cap #1}\xspace}
\newcommand{\sumsta}{\textsc{{\minsum}SP}}
\newcommand{\sumstud}{\textsc{{\minsum}SSP}}
\newcommand{\maxstud}{\textsc{{\minmax}SSP}}
\newcommand{\sumse}{\textsc{{\minsum}SE}}
\newcommand{\maxsta}{\textsc{{\minmax}SP}}
\newcommand{\maxse}{\textsc{{\minmax}SE}}
\newcommand{\minsum}{\textsc{Min\-\sumcol{Sum}}}
\newcommand{\minmax}{\textsc{Min\-\maxcol{Max}}}
\newcommand{\avgrank}{\textsc{AvgRank}}
\newcommand{\maxsm}{\textsc{CarSM}}
\newcommand{\lone}[1][$\pr$]{\ensuremath{|#1|_{\sumcol{1}}}}
\newcommand{\lmax}[1][$\pr$]{\ensuremath{|#1|_{\maxcol{\infty}}}}
\newcommand{\Unas}{\ensuremath{U_{\mathsf{un}}}}
\newcommand{\As}{\ensuremath{U_{\mathsf{as}}}}
\newcommand{\studlen}{\ensuremath{\Delta_{u}}}
\newcommand{\scholen}{\ensuremath{\Delta_{\!sc}}}
\newcommand{\enn}{\ensuremath{\hat{n}}}
\newcommand{\emm}{\ensuremath{\hat{m}}}
\newcommand{\m}{\hat{m}}
\newcommand{\en}{\hat{n}}
\newcommand{\vect}[1]{\ensuremath{\boldsymbol{#1}}}
\newcommand{\px}{\ensuremath{\vect{x}}}
\newcommand{\py}{\ensuremath{\vect{y}}}
\newcommand{\pr}{\ensuremath{\vect{r}}}
\newcommand{\pv}{\ensuremath{\vect{v}}}
\newcommand{\pq}{\ensuremath{\vect{q}}}
\newcommand{\quota}{\ensuremath{\boldsymbol{q}}}
\newcommand{\acset}{\ensuremath{\mathsf{A}}}
\newcommand{\instance}{\ensuremath{(U, W, (\succ_x)_{x\in U \cup W}, \quota)}}
\newcommand{\profile}{\ensuremath{U, W, (\succ_x)_{x\in U \cup W}}}
\newcommand{\sumcap}{\ensuremath{k^{+}}}
\newcommand{\maxcap}{\ensuremath{k^{\max}}}
\newcommand{\Vecs}{\ensuremath{\mathcal{V}}}
\newcommand{\JE}{\ensuremath{\mathsf{F}}}
\newcommand{\je}{\ensuremath{\mathsf{n_{\mathsf{F}}}}}
\newcommand{\stuopt}{\ensuremath{\hat{\mu}}}
\newcommand{\seqq}[1]{\ensuremath{\langle #1 \rangle}}
\newcommand{\myemph}[1]{{\color{purple!30!black}\emph{#1}}}
\definecolor{darkgreen}{rgb}{0.01,0.6,0.1}
\definecolor{darkblue}{rgb}{0,0,0.4}
\definecolor{winered}{rgb}{0.6,0.1,0.1}
\definecolor{doncolor}{RGB}{78,154,0}
\definecolor{falsecolor}{RGB}{0,55,255}
\definecolor{truecolor}{RGB}{164,0,0}
\definecolor{lightblue}{rgb}{0.527,0.805,0.977}
\newcommand{\blu}{\textcolor{black}}
\tikzset{ttrue/.style={color=truecolor!50!white}}
\tikzset{ffalse/.style={color=falsecolor!50!white}}
\tikzset{dont/.style={color=doncolor!50!white}}
\tikzset{trueline/.style =   {line width= 3pt, ttrue}}
\tikzset{falseline/.style =   {line width= 3pt, ffalse}}
\tikzset{dontline/.style =   {line width= 3pt, dont}}
\begin{document}
\maketitle
\begin{abstract}
  We consider many-to-one matching problems, where one side consists of students and the other side of schools with capacity constraints. 
  We study how to optimally increase the capacities of the schools so as to obtain
  a stable matching that is also perfect (i.e., every student is matched) or
Pareto-efficient or student-popular (i.e., there is no matching, where more students improve than get worse).
  We consider two common optimality criteria, one aiming to minimize the sum of capacity increases of all schools (abbrv.\ as \minsum) and the other aiming to minimize the maximum capacity increase of any school (abbrv.\ as \minmax).
  We obtain a complete picture in terms of computational complexity:
 In the case of stable and perfect matchings and stable and student-popular matchings using the \minmax\ criteria
  the problem is polynomial-time solvable,
  but all other remaining problems are NP-hard. 
  We further investigate the parameterized complexity and approximability and find that
  achieving stable and efficient matchings via minimal capacity increases is much harder than
  achieving stable and perfect matchings. \blu{ Finally we consider the variants of these problems, where decreases in the capacities are also allowed and obtain similar results.}
\end{abstract}




\pagestyle{plain}


\section{Introduction}\label{sec:intro}

Many-to-one matching with two-sided preferences has various real-world applications 
such as school choice, i.e., placement of students to schools~\cite{SoenmezAbdul2003}, college and university admission ~\cite{GaleShapley1962,Romero1998}, hospital/residents programs~\cite{Manlove2016}, staff recruitment~\cite{RP99,BaBa2004studentadmission}, or fair allocation in healthcare ~\cite{PSY2020healthcare}.
In general, we are given two disjoint sets of agents, $U$ and $W$, 
such that each agent has a preference list over some members of the other set, and each agent in $W$ has a capacity constraint (aka.\ a quota) which specifies how many agents from $U$ can be matched to it.
The goal is to find a \emph{good} matching (or assignment) between $U$ and $W$ without violating the capacity constraints.
For school choice and university admission, for example, the agents in $U$ would be students or high-school graduates, while the agents in $W$ would be schools and universities, respectively. 
To unify the terminology, we consider the school choice application, and call the agents in $U$ the students and the agents in $W$ the schools.

As to what defines a good matching, the answer varies from application to application.
The arguably most prominent and well-known concept is that of \myemph{stable} matching~\cite{GaleShapley1962,GusfieldIrving1989,Manlove2013}, which ensures that no two agents form a blocking pair, i.e., they do not prefer to be matched with each other over their assigned partners.
Stability is a key desideratum and has been a standard criterion for many matching applications.
On the other hand, the simplest concept is to ensure that every student is matched, and we call such matching a \myemph{perfect} matching.
Note that having a perfect matching is particularly important in school choice or university admission since every student should at least be admitted to some school/university. 
A \myemph{Pareto-efficient} (abbrv.\ \myemph{efficient}) matching ensures that no other matching can make one student better off without making another student worse off\footnote{Not to be confused with Pareto-optimality which requires that no matching exists that can make an agent (either student or school) better off without making another agent worse off.}~\cite{Ergin2002,SoenmezAbdul2003}.
Efficiency is very desirable for the students since it saves them from trying to find a better solution. A somewhat stronger restriction than Pareto-efficiency we can impose is \myemph{student-popularity}. We say that a matching $\mu$ is \myemph{student-popular}, if there is no other matching $\mu '$, such that strictly more students prefer their allocation in $\mu '$ than in $\mu$.

Stability and efficiency, even though equally desirable, are not compatible with each other (i.e., they may not be satisfiable simultaneously).
Neither is stability compatible with student-popularity or perfectness. In the worst case, a stable matching may have just half the size of the largest matching.
But what if we can modify the capacities of the schools?
\noindent To illustrate, consider the following example with five students $u_1, \ldots$, $u_5$ and three schools~$w_1,w_2,w_3$ each with capacity one. 

{\centering
\begin{tabular}{l@{\;}l@{\quad}|@{\quad}l}
  \multicolumn{2}{l}{Students} & Schools\\
  $u_1 \colon$ & $w_1\succ w_3 \succ w_2$ & $w_1 \colon u_2\succ u_4 \succ u_1 \succ u_5 $ \\
  $u_2\colon$ & $w_2 \succ w_1 \succ w_3 $ & $w_2\colon u_1 \succ u_2 \succ u_3 \succ u_4 \succ u_5$\\
  $u_3\colon$ & $w_2 \succ w_3$ & $w_3\colon u_3 \succ u_1 \succ u_2$\\
  $u_4, u_5\colon$ & $w_1\succ w_2$\\
\end{tabular}\par}

Here, $x\succ y$ means $x$ is preferred to $y$ by the corresponding agent. An agent not appearing in a preference list means she is not acceptable. 
The only stable matching~$\mu_1$ is to match $u_1$ with $w_2$, and $u_2$ with $w_1$, and $u_3$ with $w_3$, leaving $u_4$ and $u_5$ unmatched.
However, if $u_1$ and $u_2$ switch their partners, then they can obtain a better albeit unstable matching~$\mu_2$ (it is blocked by $\{u_4,w_1\}$).
If we increase the capacity of $w_1$ to two, then assigning $u_1$ and $u_4$ to $w_1$, $u_i$ to $w_i$, $i\in \{1,2\}$, yields a stable and efficient matching~$\mu_3$.
If we increase the capacity of $w_1$ to three, then we can extend $\mu_3$ to a stable, efficient, and perfect matching by additionally assigning $u_5$ to $w_1$.

Clearly, if we increase each school's capacity to $|U|$ so every student is assigned her first choice, then we obtain a stable, efficient, and perfect matching.
However, this  is certainly not cost effective, so we are facing the following question:
\begin{quote}
 \it How can we modify the capacities as little as possible to obtain a stable and efficient, or stable and perfect matching?
\end{quote}
In this  paper, we aim to answer this  question computationally, and look at the two most common cost functions, the total and maximum capacity 
\blu{changes} of all schools. 





\smallskip
\noindent \textbf{Our contributions.}
\blu{In the first part of the paper, we} introduce and thoroughly investigate the computational complexity of determining an optimal capacity increase vector for obtaining a stable and efficient (resp.\ stable and perfect, stable and student-popular) matching.
We consider two optimality criteria: minimizing the sum of capacity increases and the maximum capacity increase.
This gives rise to six problems: \sumse, \maxse, \sumsta, \maxsta, \sumstud, \maxstud. We show that except for \maxsta\ and \maxstud\ the other four problems are computationally hard, and remain hard even when the preference lists have constant length. 
We further search for parameterized and approximation algorithms.
For \sumsta, we prove a structural result (\cref{lemma:minsum-just-envy}) which helps constructing parametrized and approximation algorithms. 
After observing that \sumse\ and \sumsta\ can be solved in polynomial time if the capacity bound~$\sumcap$ is a constant (i.e., in XP wrt.~ $\sumcap$), we show that this  result is essentially tight since it cannot be improved to obtain an $f(\sumcap)\cdot (|U|+|W|)^{O(1)}$-time algorithm.
For the combined parameter ``the number~$|\Unas|$ of initially unmatched students'' and ``the length~$\studlen$ of the longest preference list of any unmatched students'', \sumsta\ is fixed-parameter tractable while \sumse\ remains NP-hard. For the NP-hard \sumsta, we give an integer programming formulation. 
While all three NP-hard problems are in general hard to approximate to any constant factor, \sumsta\ admits an $|\Unas|/c$-approximation algorithm for any constant $c$ (i.e. the running time is exponential in $c$), and \sumse\ and \maxse\ cannot be approximated within a factor of $(|U|+|W|+|E|)^{1-\varepsilon}$ (for any constant~$\varepsilon > 0$), where $E$ is the set of edges in the acceptability graph $G=(U,W,E)$.

Regarding stable and student-popular matchings, we show that \sumstud\ is also NP-hard and does not admit any polynomial-time constant factor approximation algorithm. On the positive side, we provide a polynomial time algorithm for \maxstud.

\blu{Finally, in the second part of the paper, we give examples to illustrate that allowing decreases in the capacities too may be beneficial for some of the problems but we obtain similar algorithmic and complexity results for this modified model.}

\begin{table}
\centering
\begin{tabular}{ |c||c|c|c|c|c|c|}
\hline 

& \textbf{const-apx} & \textbf{$|\Unas|$ -apx} & \textbf{const \studlen} & \textbf{const $|\Unas |$ } & \textbf{const $|\Unas |+\studlen$}\\
\hline
\sumsta & NP-h [T\ref{thm:sumsta-inapprox}]  & P [T\ref{thm:sumsta-unas-approx}]  & NP-c [T\ref{thm:sumsta-constdegree}] & XP [T\ref{thm:sumsta-XP-fpt-unassiged-len}], W[1]-h [T\ref{sum-stable:w1h-cap}] & FPT [T\ref{thm:sumsta-XP-fpt-unassiged-len}]\\
\hline
\maxsta & P (c=1) [T\ref{thm:maxsta}] & P [T\ref{thm:maxsta}] & P [T\ref{thm:maxsta}] & P [T\ref{thm:maxsta}]& P [T\ref{thm:maxsta}]\\
\hline
\sumse & NP-h [T\ref{thm:maxse}] & NP-h [T\ref{thm:maxse}] & NP-c [T\ref{thm:maxse}]& NP-c [T\ref{thm:maxse}]& NP-c [T\ref{thm:maxse}]\\
\hline
\maxse  & NP-h [T\ref{thm:maxse}] & NP-h [T\ref{thm:maxse}] & NP-c [T\ref{thm:maxse}]& NP-c [T\ref{thm:maxse}]& NP-c [T\ref{thm:maxse}]\\
\hline
\sumstud & NP-h [T\ref{thm:sumstud}] & NP-h [T\ref{thm:sumstud}] & NP-c [T\ref{thm:sumstud}] & NP-c [T\ref{thm:sumstud}] & NP-c [T\ref{thm:sumstud}]\\
\hline 
\maxstud & P (c=1) [T\ref{thm:maxstud}] & P [T\ref{thm:maxstud}] & P [T\ref{thm:maxstud}] & P [T\ref{thm:maxstud}] & P [T\ref{thm:maxstud}] \\
\hline
\end{tabular}

\caption{An overview of our complexity results. Here, ``const-apx'' means the complexity of an algorithm with any constant $c$ approximation factor, ''$|\Unas |$-apx'' and ''$\studlen$-apx'' is the complexity of a $|\Unas |$ or $\studlen$ approximation algorithm, finally ''const $|\Unas|$'' and ``const $\studlen$'' refer to the complexity of the decion problems when the number of initially unmatched students is bounded by a constant or   the length of the longest preference list of an unassigned student has constant value respectively.}\label{tab:results}
\end{table}

\smallskip
\noindent \textbf{Related work.}
Studying the trade-off and tension between stability and efficiency has a long tradition in Economics~\cite{SoenmezAbdul2003,ACPRT2020,EE2010eff,Ergin2002,Kesten2010,Roth1982Efficient}, but also in Computer Science~\cite{BaBa2004studentadmission,KHHSUY2017quota}.
For instance, Abdulkadiroglu and Sönmez \cite{SoenmezAbdul2003} examine Gale/Shapley's student-proposing deferred acceptance algorithm (which always yields a student-optimal stable matching) and the simple top trading cycles algorithm (which is efficient). Ergin
\cite{Ergin2002} characterizes priority structures--the so-called acyclic structure--under which a stable and efficient matching always exists. 
Kesten \cite{Kesten2010} proposes an efficiency-adjusted deferred acceptance algorithm to obtain an unstable matching which is efficient and Pareto superior to the student-optimal stable matching. 
We are not aware of any work on achieving stable and efficient matching via capacity increase. 

Chen, Skowron and Sorge \cite{ChenSkowronSorge2019teac-robust} investigate the trade-off between stability and perfectness in one-to-one matchings which is to find a perfect matching that becomes stable after a few modification to the preference lists. Our model differs from theirs as we do not allow modifications to the agents' preferences. Limaye and Nasre
\cite{LN2022} introduce two related matching problems, where each school has unbounded capacity and a value that measures the cost of assigning an (arbitrary) student there,
and the goal is to find a stable and perfect matching with minimum sum of costs or minimum maximum cost.
Their models are different from ours since they assume that any place at a given school must have the same cost, whereas in our framework  each school has a capacity~$q$ so that the first $q$ places are considered free, and only the additional places have non-zero cost. Furthermore, they allow no initial quotas, which may be interpreted in our framework as setting the initial quota of each school to be zero. 

Recently, capacity variation in many-to-one matching has been studied, albeit for different objectives. Ríos et al. 
\cite{RLPC2014} propose a seat-extension mechanism to increase student's welfare. Ueda et al. 
\cite{KHHSUY2017quota} design a strategy-proof mechanism to address minimum and maximum quotas.
Nguyen and Vohra \cite{NV2018} study many-to-one matching with couples and propose algorithms to find a stable matching by perturbing the capacities.
Bobbio et al. \cite{BCLT2022capvariation,BCLRT2022capplanning} consider capacity variations to obtain a stable matching with minimum sum of the ranks of the matched schools (\avgrank) or maximum cardinality (\maxsm). The capacity variations can be either sum of capacity increases or sum of capacity decreases. Our problem \sumsta\ can be reduced to their \minsum\avgrank\ problem by introducing sufficiently many dummy students and garbage collector schools with very high ranks so to ensure that each original student is matched. Hence, our hardness results apply to that problem. They left open the complexity of \minsum\maxsm, which is NP-hard by our hardness for \sumsta. Abe, Komiyama and Iwasaki
\cite{AKI2022CapExp} propose some alternative method and conduct experiments for \minsum \avgrank. Yahiro and Yokoo \cite{yahiro2020game} considered the problem where there are a fixed number of resources and the capacities of the schools are based on how much resource we allocate them. In their framework the resources can only be allocated to specific schools, and these relations were given by a bipartite graph. We can reduce our problem of minimizing the sum of capacity increases needed by having resources for each school that can only be allocated to that school and having $\sumcap$ other resources that can be allocated to any school. In their paper, Yahiro and Yokoo studied different mechanisms and analyzed whether they are fair (stable), Pareto-efficient or strategy proof and also gave some impossibility results. None of the results of this  paper follows from their results.

\smallskip
\noindent \textbf{Paper structure.}
After \cref{sec:defi}, where we define relevant concepts and notation,
we consider stable and perfect matchings in \cref{sec:SP}, stable and efficient matchings in \cref{sec:SE}, and stable and student-popular matchings in \cref{sec:STUPOP}. \blu{Then we extend our results to the case when decreases are also allowed in \cref{sec:decr}.}
We conclude with some future research directions in \cref{sec:conclude}.

\section{Basic definitions and fundamentals} 
\label{sec:defi}

By $\mathds{N}$ we mean the set of all positive integers. 
Given an integer~$t$, let \myemph{$[t]$} (without any prefix) denote the set~$\{1,\cdots,t\}$.
Given two integer vectors~$\px, \py$ of dimension~$t$, i.e., $\px,\py \in \mathds{Z}^{t}$,
we let \myemph{$\px + \py$} denote the new integer vector~$\pr$ with $\pr [z] = \px[z]+\py[z]$ for all $z\in [t]$,
and we write \myemph{$\px\le \py$} if for each index~$i\in [z]$ it holds that
$\px[i]\le \py[i]$; otherwise, we write \myemph{$\px \not\le \py$.}
A \myemph{preference list} (or \myemph{priority order})~$\succ$ over a set~$A$ is a linear order over~$A$.
We say that~$x$ is \myemph{preferred} to~$y$ if $x\succ y$.
Given a finite set~$A$, we write~\myemph{$\seqq{A}$} to denote an arbitrary but fixed linear order on~$A$.
If $A=\{a_1,\ldots,a_{|A|}\}$,  then $\seqq{A}$ always denotes the increasing order~$a_1,\ldots,a_{|A|}$.

\subsection{Many-to-one matching}
The \manyone\ (in short, \manyones) problem has as input a set $U=\{u_1,\ldots, u_{n}\}$ of $n$ students and a set~$W=\{w_1,\ldots,w_{m}\}$ of $m$ schools, together with the following information.
\begin{itemize}[--]
  \item Each student~$u_i\in U$ has a \myemph{preference list $\succ_{u_i}$} over a subset of schools; the agents in $\succ_{u_i}$ are called \myemph{acceptable} to~$u_i$.
  \item Each school~$w_j\in W$ has a \myemph{priority order $\succ_{w_j}$} over a subset of students.
  \item A \myemph{capacity vector~$\quota\in \mathds{N}^{m}$} which specifies
  the maximum number of students allowed to be admitted to each school~$w_j \in W$. 
\end{itemize}
Note that following the literature~\cite{GusfieldIrving1989,Ergin2002,Manlove2013}, we assume each school to have capacity at least one.
For each student $x\in U$ (resp.\ school~$x\in W$), let \myemph{$\acset(x)$} denote the \myemph{acceptable set} of~$x$, which contains all schools acceptable to student~$x$ (resp.\ all students contained in the priority order of school~$x$).
Throughout, we assume that no student (resp.\ no school) has an empty preference list (resp.\ priority order), and
for each student~$u$ and each school~$w$ it holds that $u$ finds $w$ acceptable if and only if $u$ is also in the priority order of~$w$. We call a student $u$ a \myemph{admirer} of school $w$, if $w$ is the best school for $u$ according to $\succ_u$.
We can also model the acceptability relations with a bipartite graph $G=(U,W;E)$, where $(u_i,w_j)\in E$ if and only if $u_i$ and $w_j$ find each other acceptable. 

A \myemph{matching}~$\mu$ is a (partial) mapping from $U$ to $W$ such that each student~$u\in U$ is either unmatched (i.e., $\mu(u)$ is undefined) or matched to an acceptable school (i.e., $\mu(u)\in \acset(u)$), or equivalently a subset of the edges of the acceptability graph where each student $u$ has at most one incident edge. If $u$ is unmatched, we also write \myemph{$\mu(u)=\bot$}.
If $\mu(u)=w$, then we say that $u$ and $w$ are \myemph{assigned} (or \myemph{matched}) to each other. 
We say that school~$w_j$ is \myemph{under-filled} if $|\mu^{-1}(w_j)| < \quota[j]$; otherwise $w_j$ is \myemph{full} .
We assume that each student~$u$ prefers to be assigned rather than unassigned,
i.e., $w\succ_u \bot$ for all $w\in \acset(u)$, and each school~$w$ prefers to admit more acceptable student rather than remain under-filled.

Given a  \manyones\ instance~$I=(U, W, (\succeq_x)_{x\in U\cup W}, \quota)$ and a matching~$\mu$ for $I$,
we say $\mu$ is \myemph{perfect} if every student is matched under~$\mu$,
and that it is \myemph{feasible} if each school $w_j\in W$ is assigned at most $\quota[j]$ students, i.e., $|\mu^{-1}(w_j)| \le \quota[j]$.

\smallskip
\noindent \textbf{Justified envy, blocking pairs, and stable matchings.}
Let $\mu$ be a matching.
A student~$u$ is said to have \myemph{justified envy} towards (or \myemph{justified envies}) another student~$u'$ under~$\mu$ if it holds that $w=\mu (u')\succ_u\mu (u)$ and $u\succ_w u'$.
In this  case, student~$u$ (resp.\ matching~$\mu$) is said to have a justified envy.
A student~$u$ and a school~$w$ form a \myemph{blocking pair} in~$\mu$ if the following conditions are all satisfied: 
\begin{compactenum}[(i)]
  \item $u \in \acset(w)$, 
  \item $u$ is either unmatched or prefers $w$ to~$\mu (u)$, and
  \item $w$ is under-filled or prefers $u$ to at least one of its assigned students. 
\end{compactenum}
Consequently, matching~$\mu$ is \myemph{stable} if no student forms a blocking pair with any school.
Clearly, in a stable matching, there is no justified envy.
Matching~$\mu$ is called a \myemph{student-optimal stable} matching
if 
it is stable and every student \myemph{weakly prefers} $\mu$ to all other stable matchings, i.e., for all students~$u$ and all other stable matching~$\sigma$ it holds that either $\mu(u)=\sigma(u)$ or $\mu(u)\succ_u \sigma(u)$. It is well known that such a student-optimal stable matching always exists~\cite{GusfieldIrving1989}.

\smallskip
\noindent \textbf{Dominance and efficient matchings.}
A matching~$\mu$ \myemph{dominates} another matching~$\sigma$ if the following holds:
\begin{compactitem}[--]
  \item for each student~$u$ it holds that either $\mu(u)=\sigma(u)$ or $\mu(u)\succ_u \sigma(u)$, and
  \item at least one student~$u$ has $\mu(u)\succ_u \sigma(u)$.
\end{compactitem}
Clearly, a student-optimal stable matching is a stable matching which is not dominated by any other stable matching, but it can be dominated by other matchings.
We call a matching~$\mu$ \myemph{efficient} (aka.\ \myemph{Pareto-efficient}) if it is not dominated by other matchings.
It is well-known that stability and efficiency are not compatible with each other, even in the restricted one-to-one matching case, i.e., each school has capacity one.
\paragraph{Student-popular matchings.}

Next we define student-popularity. For this, we compare two matchings $\mu $ and $\mu '$ in the following way. Each student $u\in U$ cast a vote $vote_u(\mu ,\mu ')\in \{ -1,0,1\}$, such that $vote_u(\mu ,\mu ')=0$, if $\mu (u)=\mu ' (u)$, $vote_u(\mu ,\mu ')=1$, if $\mu (u)\succ_u\mu ' (u)$ and $vote_u(\mu ,\mu ')=-1$ otherwise. Then, we say that a matching $\mu $ is \emph{student-popular}, if there is no other matching $\mu '$, which strictly beats $\mu$ in this voting. That is, $\mu$ is student-popular if $\sum_{u\in U}vote_u(\mu ,\mu ')\ge 0$ for any matching $\mu '$.

\subsection{Optimal matchings via capacity increase}
We consider the following decision problems, where $\Pi \in \{$stable and perfect, stable and efficient, 
stable and student-popular
$\}$.

\decprob{\minsummatching ({\normalfont\text{resp. }}\minmaxmatching)}
{%
  An instance~$I=\instance$ of \manyones, a capacity bound~$\sumcap\in \mathds{N}$ (resp.\ $\maxcap\in \mathds{N}$).%
}
{%
  Is there a capacity increase vector~$\pr$ with $\lone\le \sumcap$ (resp.\ $\lmax \le \maxcap$) s.t.\ $I'=(\profile, \quota+\pr)$ admits a $\Pi$~matching. 
}


We abbreviate the problems \minsummatching[Stable and Perfect], \minmaxmatching[Stable and Perfect], \minsummatching[Stable and Efficient], \minmaxmatching[Stable and Efficient],
\minsummatching[Stable and Student-popular],
\minmaxmatching[Stable and Student-popular]
with \sumsta,  \maxsta, \sumse, \maxse, 
\sumstud, and \maxstud\ respectively.

For brevity's sake, we also use the same names to refer to the corresponding minimization variant.
For each of the above problems,
we call a capacity increase vector~$\pr$
\begin{compactitem}[--]
  \item \myemph{feasible} if it results in a solution for the corresponding problem,
  \item \myemph{good} if it is a witness for a YES answer, and
  \item \myemph{optimal} if $\lone$ (resp.\ $\lmax$) is minimum among all feasible vectors for \minsum\ (resp.\ \minmax).
\end{compactitem}

\textcolor{blue}{In Section \ref{sec:decr} we consider the variants of these problems, where decreases are also allowed in the capacities, which corresponds to allowing $\pr$ to be negative, so in that case, we call $\pr$ a \emph{capacity change} vector instead of a capacity increase vector. }

It is straightforward that \sumsta\ and \maxsta\ are contained in NP.
NP-containment of the other problems are based on the following.

\begin{lemma}
\label{lemma:sumse-inNP}
    Checking whether an \manyones\ instance admits a stable and efficient or stable and student-popular matching can be done in polynomial time.
\end{lemma}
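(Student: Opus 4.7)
The plan is to reduce the existence question to a single check on the student-optimal stable matching (SOSM), using the fact that SOSM is weakly preferred by every student over every other stable matching.

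First I would compute the SOSM, call it $\mu^{\star}$, which is well-defined and computable in linear time by \citet{GusfieldIrving1989}. The key observation is the following dichotomy: an instance admits a stable and efficient matching if and only if $\mu^{\star}$ itself is efficient. The ``if'' direction is immediate. For the ``only if'' direction, suppose $\sigma$ is any stable and efficient matching. By definition of SOSM, every student weakly prefers $\mu^{\star}$ to $\sigma$, i.e., for every student $u$ we have $\mu^{\star}(u) = \sigma(u)$ or $\mu^{\star}(u) \succ_u \sigma(u)$. If $\mu^{\star} \neq \sigma$, then at least one student strictly prefers $\mu^{\star}$, so $\mu^{\star}$ dominates $\sigma$, contradicting the efficiency of $\sigma$. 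Hence $\mu^{\star} = \sigma$, and in particular $\mu^{\star}$ is efficient.

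Given this equivalence, the algorithm is simply: compute $\mu^{\star}$, then test whether $\mu^{\star}$ is Pareto-efficient. The latter is polynomial-time solvable by the result of \citet{cechlarova2014pareto}, which applies to general many-to-one matchings. Combining the two steps yields a polynomial-time algorithm.

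The only subtlety is making sure that the dominance argument is applied correctly in the many-to-one setting: ``dominates'' here compares matchings only from the students' side (as in our definition preceding the lemma), and the student-optimality of $\mu^{\star}$ is precisely a statement about student preferences, so the argument goes through verbatim without needing to reason about how schools compare the two matchings. I expect this to be the main conceptual point to state carefully; beyond that, both subroutines (computing SOSM and checking Pareto-efficiency of a given matching) are invoked as black boxes, so there is no real calculation to grind through.
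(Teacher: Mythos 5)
Your proposal is correct and matches the paper's proof essentially verbatim: the paper likewise observes that any stable and efficient matching must coincide with the student-optimal stable matching (else the latter would dominate it), and then invokes \citet{cechlarova2014pareto} to test Pareto-efficiency of that single matching in polynomial time. The dominance argument you spell out is exactly the one the paper leaves implicit.
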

\begin{proof}
  First, observe that a stable and efficient or stable and student-popular matching must be the student-optimal stable one as otherwise the student-optimal stable matching Pareto-dominates it (hence also beats it in the voting).
  So it is enough to show that we can check whether the student-optimal stable matching is Pareto-efficient or student-popular for the students since the student-optimal stable matching can be found in linear time.
  For both of these, we can treat the instance as an instance with one-sided preferences.
  By the result of Cechlárová et al. \cite{cechlarova2014pareto}, checking Pareto-efficiency for one-sided preferences can be done in polynomial time. By Biró, Irving and Manlove \cite{biro2010popular}, checking student-popularity can also be done in polynomial-time. 
\end{proof}

First of all we state a trivial, but important observation. 
\begin{observation}
Any student-popular matching $\mu$ is also efficient.
\end{observation}
Hence, the problem of finding a student-popular and efficient matching is just the same as finding a student-popular one.

The example in the introduction discusses stable and efficient matchings. Another example is the following.
\begin{example}\label{ex:problems}
  Suppose we have three schools $w_1,w_2,w_3$, each with capacity one, and five students $u_1,u_2,u_3,u_4,u_5$.
  Let the priority lists and preference lists be as follows, where $i\in [5]\setminus \{3\}$ and $j\in [2]$.
  
  \begin{tabular}{l@{\quad}|@{\quad}l}
      Students & Schools\\
      $u_{i} \colon w_1\succ w_2$ &  $w_j \colon u_1\succ u_2 \succ u_3 \succ u_4 \succ u_5$ \\
      $u_3\colon w_1 \succ w_2 \succ w_3$ & $w_3\colon u_3$\\
    \end{tabular}

    \noindent The student-optimal (and only) stable matching is $\mu =\{ (u_1,w_1)$, $(u_2,w_2),(u_3,w_3)\}$.
  Students~$u_4$ and $u_5$ are the two unassigned students.
  If we assign both $u_4$ and $u_5$ to $w_1$, 
  then we also need to reassign $u_2$ and $u_3$ to $w_1$ since otherwise both $u_2$ and $u_3$ would have justified envies towards $u_4$ and $u_5$.
  This  amounts to a capacity increase of four.
  If we assign both $u_4$ and $u_5$ to $w_2$ however, then only student $u_3$ would have justified envies.
  Now, we can reassign $u_3$ to $w_2$ as well (increasing the capacity of $w_2$ by three) and obtain a stable matching which matches every student.
  One can verify that no smaller capacity increase than three is possible to obtain a stable and perfect matching.
\end{example}

\smallskip
\noindent \textbf{Notation.}
By the well-known Rural Hospitals theorem~\cite{GusfieldIrving1989}, every \manyones-instance admits a stable matching and all stable matchings match the same set of students and schools. Hence, if not stated otherwise, given an \manyones-instance, let \myemph{$\As$} and \myemph{$\Unas$} denote the set of assigned and unassigned students in a stable matching, respectively. Furthermore, let \myemph{$\studlen$}
denote the maximum length of the preferences of all unassigned students in~$\Unas$, respectively.
Finally, let \myemph{$\scholen$} denote the maximum length of the priority lists of the schools.

\subsection{Structural properties}
Kominers \cite{Kominers2020} shows that increasing the capacity of one school by one weakly improve the students' outcome.
Due to her, we can observe the following.
\begin{lemma}\label{lem:capincrease}
  Let $I_1=(U,W,(\succ_x)_{x\in U\cup W}, \pq_1)$ and  $I_2=(U,W,(\succ_x)_{x\in U\cup W}, \pq_2)$ denote two \manyones-instances with the same sets of students and schools, and the same preferences and priority lists such that $\pq_1\le \pq_2$.
  Then, the following holds.
  \begin{compactenum}[(i)]
    \item\label{lem:improve}  Every student weakly prefers the student-optimal stable matching~$\mu_2$ in~$I_2$ to the student-optimal stable matching~$\mu_1$ in~$I_1$.
    \item\label{lem:underfilled} If a school~$w$ is under-filled in $\mu_1$,
    then $\mu_2^{-1}(w)\subseteq \mu_1^{-1}(w)$ holds.
  \end{compactenum}
\end{lemma}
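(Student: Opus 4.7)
The plan is to derive both parts from Kominers' one-step result via a short inductive/contradiction argument.

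For part (\ref{lem:improve}), I would argue by induction on the non-negative integer $|\pq_2-\pq_1|_1$. If $\pq_1=\pq_2$ there is nothing to show. Otherwise, pick any coordinate $j$ with $\pq_1[j]<\pq_2[j]$ and define an intermediate capacity vector $\pq'$ that equals $\pq_1$ except $\pq'[j]=\pq_1[j]+1$; clearly $\pq_1\le \pq'\le \pq_2$ and $|\pq_2-\pq'|_1=|\pq_2-\pq_1|_1-1$. Let $\mu'$ be the student-optimal stable matching of the instance with capacities $\pq'$. By Kominers' result (cited above the lemma), every student weakly prefers $\mu'$ to $\mu_1$, and by the inductive hypothesis every student weakly prefers $\mu_2$ to $\mu'$. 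Since the relation ``weakly prefers'' is transitive on the students' extended preferences (including the $\bot$ option at the bottom), the claim follows. The only thing to verify carefully is that Kominers' statement really gives the form I want for a single-coordinate bump, but this is exactly the statement quoted in the paper.

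For part (\ref{lem:underfilled}), I would give a one-line contradiction argument using part (\ref{lem:improve}) and the stability of $\mu_1$. Suppose there is some $u\in \mu_2^{-1}(w)\setminus \mu_1^{-1}(w)$. Then $\mu_2(u)=w\neq \mu_1(u)$, and by part (\ref{lem:improve}) $u$ weakly (hence strictly, since the two partners differ) prefers $w$ to $\mu_1(u)$, where $\mu_1(u)$ may be $\bot$. Moreover, $u$ is acceptable to $w$ because $\mu_2(u)=w$ places $u$ in $w$'s priority order, and by hypothesis $w$ is under-filled in $\mu_1$. The three blocking-pair conditions (i)--(iii) from the definition are therefore met, so $(u,w)$ blocks $\mu_1$, contradicting the stability of $\mu_1$.

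I expect no real obstacles: the hard work is entirely contained in the cited Kominers result, and the rest is a clean inductive chaining plus a textbook blocking-pair argument. The only pitfalls are cosmetic, namely to handle the $\bot$ case carefully in both parts (so that ``weakly prefers'' is well-defined when a student is unmatched) and to remember that full stability, not just the absence of justified envy, is what rules out the pair $(u,w)$ in part (\ref{lem:underfilled}) when $u$ is unmatched in $\mu_1$.
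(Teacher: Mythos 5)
Your proposal is correct and matches the paper's argument: part (i) is reduced to Kominers' single-unit capacity increase via transitivity of weak preference, and part (ii) follows by the same blocking-pair contradiction with the under-filled school $w$. No gaps; your extra care about the $\bot$ case is a fine (if implicit in the paper) detail.
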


\begin{proof}
  Let $I_1$ and $I_2$ be as defined.
  Let $U$, $W$, $\pq_1$, $\pq_2$ be the sets of students and schools, the capacity vectors for $I_1$ and $I_2$, respectively.
  To show the first statement, it suffices to consider the case when exactly one school has increased its capacity and this  increase is one since the weakly preferring relation is transitive, which has been shown by Kominers \cite{Kominers2020}.

  For Statement~\eqref{lem:underfilled}, suppose, for the sake of contradiction, that there exists a student~$u\in \mu_2(w)\setminus \mu_1(w)$.
  Note that by the first statement, $u$ strictly prefers $\mu_2(u)=w$ to $\mu_1(u)$.
  Since $w$ is under-filled in~$\mu_1$, it forms with $u$ a blocking pair in~$\mu_1$, a contradiction.
\end{proof}

We proceed to a crucial observation that helps us decrease the number of edges in an input instance of \sumsta\ or \maxsta.

\begin{observation}
\label{obs:deleteedges}
    The optimum values of \sumsta\ and \maxsta\ do not change, if we delete each edge $(u,w)$ from the instance, where $\stuopt (u) \succ_u w$ for the student-optimal stable matching $\stuopt$.
\end{observation}
Indeed, for any capacity increase vector $\pr$, we have by Lemma \ref{lem:capincrease} that each student $u$ will be admitted to a school at least as good as $\stuopt [u]$ if the capacities are increased with $\pr \ge 0$, so the deletion of these edges do not decrease the size of the stable matching. This leads to the following crucial property we can assume of any input instance of \sumsta\ or \maxsta.

\begin{corollary}
\label{cor:uniquestable}
    For \sumsta\ and \maxsta\, we may assume that in the input instance $I$, there is a unique (so both school and student-optimal) stable matching, where each student is unmatched or is at her worst acceptable school.  
\end{corollary}

Next, we introduce a some new concepts and prove a useful lemma for designing further parameterized and approximation algorithms. 
Let $\stuopt$ be the student-optimal stable matching of a given instance $I$. 
Consider all possible vectors~$\pv$, whose coordinates are the unassigned students and for each coordinate~$u$, the entry is an acceptable school for the unassigned student~$u$.
Denote the set of such vectors by \myemph{$\Vecs$}.

\newcommand{\n}{n}
\newcommand{\worst}{wo}
Let $\pv \in \Vecs$ be such a vector. We create a new instance $I(\pv)$ as follows. For each school $w$, let $\n(w,\pv )$ be the number of coordinates of $\pv$ that is equal to $w$. Decrease every school $w$'s capacity to $(\quota [w]-\n (w,\pv ))^+ =\max \{ 0, \quota [w]-\n(w,\pv )\}$. Delete every $(u,w)$ edge, where $u\in \As$ and there is an $u'\in \Unas$ such that $w\succ_{u'} \pv [u']$ and $u'\succ_w u$. Then, delete every edge $(u,w)$ with $u\in \As$ such that there is a student $u'$ with $\pv [u']\succ_u w$ and $u\succ_{\pv [u']}u'$. We also delete each student in $\Unas$.
 If we delete all edges incident to a student $u$, then we do not define $I(\pv )$. Hence, suppose that all $u\in \As$ still have acceptable schools.
This concludes the creation of the new instance $I(\pv)$.
Then, for each student $u\in \As$, let $\worst (u, \pv)$ be the worst remaining school $w$ for $u$ in $I(\pv )$. 

\newcommand{\mmid}{\hat{\mu}_{\pv}}
Finally, let $\JE (\pv)$ denote the set of students that remain unmatched in the student-optimal stable matching $\mmid $ of $I(\pv)$ and let $\je (w,\pv )$ denote $|\{ u\in \JE( \pv) \mid  \worst (u,\pv ) =w\} |$. 

\newcommand{\mnew}{\mu^*_{\pv}}
Create a matching $\mnew$ as follows. Match every $u\in \Unas$ to the school corresponding to her coordinate in $\pv$. Match every $u\in \As \setminus \JE (\pv)$ to the school she is matched to in $\mmid $. Finally, match each $u\in \JE (\pv) $ to $\worst (u,\pv)$ and increase the capacities of the schools if needed, to the number of students admitted. Denote the new instance created by $I^*(\pv)$. It is easy to see that each school $w$'s capacity becomes $\max \{ \n (w,\pv) +|\mmid (w)| + \je (w,\pv), \quota [w]\}$. 

\newcommand{\Vecstable}{\mathcal{V}'}

For clarity, we provide a pseudocode on how to create the instance $I^*(\pv )$ and the matching $\mnew $ in Algorithm \ref{alg:procedure}.

\begin{algorithmic}[h]

\begin{algorithm}
\caption{Create $I^*(\pv)$ and $\mnew $} \label{alg:procedure}   

\STATE Input: $\pv \in \Vecs$, an instance $I$ of \manyone\ and the sets $\As,\Unas$
Output: $I^*(\pv),\mnew$
\STATE Set $E=E':=\{ (u,w) \mid u\in \acset (w) \wedge w\in \acset (u)\}$

\FOR{Each school $w\in W$}
\STATE $n(w,\pv)$:= number of coordinates of $\pv$ that equals $w$
\STATE Set $\quota '[w]=\max \{ \quota [w]-n(w,\pv ),0\}$
\ENDFOR
\FOR{$(u,w)\in E'$ with $u\in \As$}
\IF{there is an $u'\in \Unas$ s.t. $w\succ_{u'}\pv [u'] \wedge u'\succ_w u$ }
\STATE Delete $(u,w)$ from $E'$
\ELSIF{there is an $u'\in \Unas$ s.t. $\pv [u']\succ_u w \wedge u\succ_{\pv [u']} u'$ }
\STATE Delete $(u,w)$ from $E'$
\ENDIF
\ENDFOR
\IF{There is a $u\in \As$ with no remaining incident edges}
\STATE \textbf{Return} "No $I^*(\pv )$ instance"
\ENDIF
\FOR{$x\in \As \cup W$}
\STATE Set $\succ'_x$ to be the restriction of $\succ_x$ to their remaining partners in $\As \cup W$ 
\ENDFOR
\STATE Set $I(\pv ) = (\As , W, (\succ_x)_{x\in \As \cup W}, \quota') $
\STATE $\mmid  :=$ student-optimal stable matching of $I(\pv )$
\STATE $\JE (\pv ):=$ unmatched students in $\mmid $ and $\je (w,\pv):=|\{ u\in \JE( \pv) \mid  \worst (u,\pv ) =w\} |$
\FOR{$u\in \As$}
\STATE $\worst (u,\pv) :=$ the worst remaining school of $u$ in $I(\pv)$
\ENDFOR
\FOR{$u\in U$}
\IF{$u\in \Unas$}
\STATE $\mnew  (u) := \pv [u]$
\ELSIF{$u\in \JE (\pv )$}
\STATE $\mnew  (u):= \worst (u,\pv)$
\ELSE
\STATE $\mnew (u):=\mmid (u) $

\ENDIF
\ENDFOR
\FOR{$w\in W$}
\STATE $\quota^* [w]:=\max \{ \n (w,\pv) +|\mmid (w)| + \je (w,\pv), \quota [w]\}$
\ENDFOR
\STATE \textbf{Return} $I^*(\pv )=(U,W,(\succ_x)_{x\in U\cup W},\quota^*)$ and $\mnew $
\end{algorithm}
\end{algorithmic}

\begin{lemma}
\label{lemma:minsum-just-envy}
Let $\Vecstable \subseteq \Vecs$ be the set of those $\pv$ vectors, such that $I^*(\pv)$ exists and $\mnew $ is stable in $I^*(\pv)$.
Then, the sum of values of an optimal capacity increase vector for \sumsta\  is equal to $\min\limits_{\pv\in \Vecstable}\{ \sum_{w\in W}(\n (w,\pv) +|\mmid (w)| + \je (w,\pv) - \quota [w])^+\}$.
\end{lemma}
\begin{proof}
 Let $I=(U,W,(\succ_x)_{x\in U\cup W}, \pq)$ be an instance of \manyones, and let $\pr$ be a capacity increase vector with minimum $\lone$ such that $(U,W,(\succ_x)_{x\in U\cup W}, \pq+\pr)$ admits a stable and perfect matching. By Corollary~\ref{cor:uniquestable}, we assume that in $I$, there is a unique (so also student-optimal) stable matching $\stuopt$, where every $u\in U$ is either unmatched or matched to her worst acceptable school.

First, we show that $\min\limits_{\pv\in \Vecstable}\{\sum_{w\in W}(\n (w,\pv)+|\mmid (w)| +\je (w,\pv)- \quota [w])^+\}\ge \lone$. 
Let $\pv \in \Vecstable$. Then, by the above, we can create a new instance $I^*(\pv)$, which admits a stable matching $\mnew $, where every student is assigned. As school $w$'s capacity becomes $\max \{ \n (w,\pv) +|\mmid (w)| + \je (w,\pv), \quota [w]\}$, we have that we need to increase $\quota [w]$ by $(\n (w,\pv) +|\mmid (w)| + \je (w,\pv) - \quota [w])^+$. So, there is a feasible capacity increase vector, where the total increase is at most $ \sum_{w\in W}(\n (w,\pv) +|\mmid (w)| + \je (w,\pv) - \quota [w])^+$. 

\newcommand{\stuoptR}{\hat{\mu}_{\quota +\pr}}
Next, we show that $\min\limits_{\pv\in \Vecstable}\{ \sum_{w\in W}(\n (w,\pv) +|\mmid (w)| + \je (w,\pv) - \quota [w])^+\}\le \lone$.
Consider the optimal capacity increase vector $\pr$ and let $\stuoptR$ be the student-optimal stable matching in $(U,W,(\succ_x)_{x\in U\cup W}, \pq+\pr)$. Take the vector $\pv \in \Vecs$, where each $u\in \Unas$ is assigned to the same school as in $\stuoptR$. Clearly, then it is true that there is no justified envy among the students in $\Unas$ if they are assigned according to $\pv$. It is also clear that in $\stuoptR$, (i) each student $u\in \As$ is at a school $w$ that no student $u'\in \Unas$ with $u'\succ_w u$ envies and (ii) each student $u\in \As$ is in a school that is at least as good as $\worst (u,\pv )$ by the stability of $\stuoptR$. This implies that if we restrict the instance to $\As \cup W$, delete each edge $(u,w)$, if there exists a student $u'\in \Unas$ such that $u'\succ_w u$, $w\succ_{u'}\pv [u']$ or there is a $u'\in \Unas$ such that $\pv [u']\succ_u w $ and $ u\succ_{\pv [u']} u'$ ( meaning that $\worst (u,\pv) \succ_u w$) and set each school $w$'s capacity to $\quota [w] +\pr [w] - \n (w,\pv ) \ge 0$, then there is a perfect stable matching in that instance.
 Denote this instance $\hat{I}$.

In the instance $I(\pv)$, the capacity of each school $w$ is exactly $ (\quota [w]-\n (w,\pv))^+$, which is an upper bound on how many original seats are still available at $w$ without capacity increase, after assigning the students in $\Unas$ according to $\pv$. As $(\quota [w] -\n (w,\pv))^+ \le \quota [w] +\pr [w] -\n (w,\pv )$ for each $w$, and in the instance $I(\pv)$, we have the same set of schools, students and edges as in $\hat{I}$ and there are $\sum_{w\in W} \je (w,\pv)$ students that are unassigned in the student-optimal stable matching of $I(\pv)$, we need at least $\sum_{w\in W} \je (w,\pv)$ new places in $I(\pv)$ to have a stable matching where every student (in $\As$) is assigned, and hence

$$\sum_{w\in W}\quota [w] +\pr [w] - \n (w,\pv) - (\quota [w]-\n(w,\pv ))^+ \ge \sum_{w\in W} \je (w,\pv). $$

Rearranging this, we get that

$$\sum_{w\in W}\pr [w] \ge \sum_{w\in W} \je (w,\pv) -\quota [w] + \n (w,\pv) +(\quota [w]-\n (w,v))^+$$.

Using that $|\mmid (w) |\le (\quota [w] -\n (w,\pv) )^+$ and $\je (w,\pv) -\quota [w] + \n (w,\pv) +(\quota [w]-\n (w,v))^+ \ge 0$, as $\je (w,\pv) \ge 0$, we get that 

$$\sum_{w\in W}\pr [w] \ge \sum_{w\in W}(\n (w,\pv) +|\mmid (w)| + \je (w,\pv) - \quota [w])^+.$$

As $\lone = \sum_{w\in W}\pr [w]$, it only remains to show that $\pv \in \Vecstable$.

By our above observations about $\stuoptR$, we get that in the instance $\hat{I}$, the restriction of $\stuoptR$ is a stable matching that assigns every student from $\As$ and fills every school $w$ that is envied by a student $u\in \Unas$. To see that the restriction of $\stuoptR$ to $\hat{I}$ is stable, note that the same set of schools are underfilled, and hence a blocking edge for this matching must also be blocking for $\stuoptR$. In particular, this implies that $I(\pv ) $ and hence $\mnew $ exist. Hence, we only have to show the stability of $\mnew $.

In the instance $I(\pv)$, we have that each school's capacity is weakly smaller than $\quota [w]+\pr [w]- \n (w,\pv)$, which is $w$'s capacity in $\hat{I}$. Also, as we have shown, the student-optimal stable matching in $\hat{I}$ fills every school that is envied by some $u\in \Unas$ in $\stuoptR$. Combining these we get that the student-optimal stable matching $\mmid $ in $I(\pv)$ also satisfies that it fills 
every school that is envied by some $u\in \Unas$ in $\stuoptR$ (if we decrease a school's capacity, then if it was full in the student-optimal stable matching then it remains full by \cref{lem:capincrease} (\ref{lem:underfilled})). 

Therefore, in $\mnew $ we have that for any student $u\in \Unas$ and school $w\succ_u \stuoptR (u)$, $w$ is full in $\mnew $. Also, by the definition of the edges in $I(\pv)$ and the schools $\worst (u', \pv)$ we get that all students from $\As$ at $w$ in $\mnew $ are preferred to $u$ by $w$. Hence, together with the fact that there is no justified envy between students from $\Unas$, we conclude that there can be no blocking edge $(u,w)$ to $\mnew $, where $u\in \Unas$.

Similarly, if there is an edge $(u,w)$ with $u\in \As$ that got deleted in $I(\pv)$, then either $u$ got admitted to a school that she prefers to $w$, if $\worst (u,\pv)\succ_u w$ or there is a student $u'\in \Unas$ such that $u'$ prefers $w$ to $\stuoptR (u')=\mnew (u')$ and $u'\succ_w u$. Therefore, if such an edge $(u,w)$ blocks $\mnew $, then there is an edge $(u',w)$ with $u'\in \Unas$ that blocks it, contradiction.

Suppose that there is a blocking edge $(u,w)$ with $u\in \As$ that also exits in $I(v)$. If $w$ is underfilled in $\mnew$, then $\quota [w] > \n (w,\pv) + |\mmid  (w)|+\je (w, \pv)$, so $w$ was underfilled in $I(\pv)$ with the matching $\mmid $ too. If $u\in \JE (\pv )$, then $(u,w)$ would block $\mmid $, contradiction. If $u\notin \JE ( \pv)$, then $u$ is assigned to the same school in $\mnew $ and $\mmid$, so $(u,w)$ would block $\mmid $ again, contradiction. Hence, $w$ is full. By the definition of the edges in $I(\pv)$, there is no student $u'\in \Unas$ at $w$ such that $u\succ_w u'$ (otherwise $w\succ_u  \mnew (u)\succeq \worst (u,\pv) \succeq w$, contradiction). So there must be a student $u'\in \As$ such that $u'\in \mnew (w)$ and $u\succ_w u'$. If $u'\in \mmid (w)$, we get that $\mmid $ is not stable, contradiction. Otherwise, $u'$ is unassigned in $\mmid$ and $w=\worst (u',\pv)$. 
Hence, $w$ is the worst entry of $u'$. By Corollary~\ref{cor:uniquestable}, we can assume that $w$ is at least as good for $u'$ as the school that $u'$ gets assigned to in the student-optimal stable matching $\stuopt$ in $I$. 
As $(u,w)$ does not block this stable matching $\stuopt$ in $I$ either, we have that $w$ is worse that the school $w'=\stuopt (u)$ of $u$. However, $u$ is admitted to a school at least as good $w'$ (edges to worse schools are deleted even from $I$), so $(u,w)$ does not block, contradiction.

We conclude that $\mnew $ is stable, and for this $\pv$, it holds that $\pv \in \Vecstable$ and $\sum_{w\in W}(\n (w,\pv ) +|\mmid (w)|+\je (w,\pv)-\quota [w])^+ =\lone$, concluding the proof.

\end{proof}

\section{Stable and Perfect Matchings}\label{sec:SP}

In this  section, we focus on stable and perfect matchings and the computation complexity regarding \sumsta\ and \maxsta. 

\subsection{Minisum capacities}\label{sec:SP+sum}

\subsubsection{Hardness results}

We start with a dichotomy result regarding the length of the priority/preference list. 
\newcommand{\vcforward}{%
   If $G$ admits a vertex cover~$V'$, then there is a good capacity increase vector~$\pr$ with $\lone \le \emm+|V'|$.
 }
 \newcommand{\vcbackward}{%
    If there exists a feasible capacity increase vector~$\pr$, then there exist a vertex cover of size at most~$\lone-\emm$.
 }
\begin{theorem}
\label{thm:sumsta-constdegree}
\sumsta\ is NP-complete; hardness remains even if $\pq=1^m$, all students have at most four acceptable schools, $\studlen = 2$ and $\scholen = 3$.
If $\studlen \le 1$ or $\scholen \le 2$, then \sumsta\ becomes polynomial-time solvable.

\end{theorem}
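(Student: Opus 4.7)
Containment in \np\ is already supplied by \cref{cor:contained-NP}, so the two non-trivial ingredients are an \np-hardness reduction respecting all four structural bounds and tractability proofs for the cases $\unmatchlen\le 1$ and $\scholen\le 2$. The high-level strategy is to reduce from \cubicVC, using vertex gadgets that charge a fixed ``+1 per chosen vertex'' on top of the ``+1 per edge'' base cost so that the minimum $\lone$ equals $\emm$ plus the minimum vertex cover size.

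For the hardness I would build, from a cubic graph $G=(V,E)$ with $\enn=|V|$ and $\emm=|E|$, an \manyones-instance with $\pq=1^m$ as follows. Each edge $e=\{u,v\}$ contributes a single initially-unassigned ``edge-student'' $z_e$ whose preference list is exactly $w_u\succ w_v$ (length~$2$, so $\unmatchlen=2$), and $z_e$ is ranked by both $w_u$ and $w_v$ below any gadget student. Each vertex $v$ contributes a small gadget built around one or more vertex-schools $w_v$, a top-priority ``filler'' student that saturates the initial unit capacity, and a ``pilot'' student that is matched away from $w_v$ in the student-optimal stable matching but can only be absorbed at $w_v$ if one extra seat is opened there. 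This pilot is what creates the \emph{fixed per-vertex surcharge}. Because a cubic vertex has three incident edges but priority lists are bounded by~$3$, I would split the gadget across a constant number of vertex-schools so that no school's list exceeds three students; filler and pilot each appear in only constantly many lists, keeping $\studlen\le 4$.

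Correctness would proceed in two directions. (Completeness) From a vertex cover~$V'$ of size~$k$ I assign each edge to one endpoint in~$V'$; I then increase the capacity of the relevant $w_v$ by the number of edges assigned to $v$ plus one for the pilot, giving a stable and perfect matching with $\lone\le \emm+k$. (Soundness) From any feasible $\pr$ with $\lone\le \emm+k$ I extract $V'=\{v:\pr \text{ opens a seat in the gadget of } v\}$ and show, using the pilot-overhead property, that $V'$ covers every edge and $|V'|\le \lone-\emm$; here \cref{lemma:minsum-just-envy} is handy for justifying that edge-students that need a new seat at $w_v$ can only be absorbed together with the pilot. For the tractable cases I invoke \cref{lemma:minsum-just-envy} directly: if $\unmatchlen\le 1$ then every $u\in\Unas$ has a unique acceptable school, so $|\Vecs|=1$ and a single evaluation of $\je$ yields the optimum in polynomial time; if $\scholen\le 2$, every school admits at most two students and hence each unassigned student can displace at most one specific rival, so minimising $\je(\pv)$ over $\pv\in\Vecs$ reduces to a bipartite $b$-matching problem on the bounded-width ``envy graph''.

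\textbf{Main obstacle.} The delicate step is gadget design: simultaneously guaranteeing (i) the pilot overhead is paid exactly once per cover vertex (not zero times, and not twice), (ii) priority lists of length at most $3$ despite each cubic vertex carrying three incident edges, and (iii) preference lists of length $\le 4$ for assigned students and $\le 2$ for edge-students. Getting the accounting tight so that the minimum $\lone$ is exactly $\emm+|V'_{\min}|$ (rather than, say, $\emm+c\cdot|V'_{\min}|$ for some constant $c\ne 1$) requires showing that in any optimal solution the pilot of a ``used'' vertex is always matched at its home school and never re-routed through a neighbouring gadget.
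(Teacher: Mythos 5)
Your overall strategy coincides with the paper's: reduce from bounded-degree \VC\ with a base cost of $\emm$ for the edge students plus a surcharge of one per cover vertex triggered by a ``pilot'' student, and dispose of $\unmatchlen\le 1$ via \cref{lemma:minsum-just-envy}. However, the one step you explicitly defer --- designing a gadget that keeps $\scholen\le 3$ while charging the pilot exactly once per used vertex --- is where all the work lies, and your sketch does not resolve it; as written, the soundness direction cannot be completed. The paper's resolution is worth internalising: for each vertex $v_i$ and each incident edge $e_t$ it creates a \emph{separate} unit-capacity school $v_i^t$ with priority list $d_i^t\succ u_i\succ e_t$, where $d_i^t$ is a private filler and $u_i$ is the single pilot for $v_i$; the edge student $e_t$ ranks only the two edge-specific copies $v_i^t\succ v_j^t$; and $u_i$ ranks its at most three schools $v_i^z$ followed by a private home school $w_i$. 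Routing any edge student to any copy $v_i^t$ makes $u_i$ justifiedly envious (since $u_i\succ_{v_i^t}e_t$ and $v_i^t\succ_{u_i}w_i$), so $u_i$ must be rehoused at some $v_i^z$ at the cost of one extra seat --- and only one, because there is a single pilot per vertex no matter how many incident edges are routed to its copies. This simultaneously yields $\scholen=3$, $\studlen=4$, $\unmatchlen=2$, $\pq=1^m$, and the exact accounting $\opt=\emm+|V'|$ for a minimum vertex cover $V'$, so the multiplicative-constant worry you raise never arises.

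For $\scholen\le 2$ your bipartite $b$-matching detour is both unnecessary and unjustified: $\je(\pv)$ counts \emph{distinct} assigned agents with justified envy, which is not obviously a $b$-matching objective. The paper instead observes that the case is trivial. Any school acceptable to an unassigned student must already be full in the initial stable matching (otherwise they would form a blocking pair), and with a priority list of length at most two the only other agent who could envy the newcomer is already seated at that very school; hence $\je(\pv)=0$ for every $\pv\in\Vecs$, and by \cref{lemma:minsum-just-envy} the answer is simply whether $\sumcap\ge|\Unas|$.
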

\begin{proof}
  We first show the polynomial results.
  Clearly, if $\studlen\le 1$, then there is only one possible assignment vector for the unassigned students, and by \cref{lemma:minsum-just-envy} the problem can be solved in polynomial time.
  Next, assume that $\scholen \le 2$.
  We aim to show that every assignment vector is ``good''. 
  Since each school has at least one seat, it must receive at least one student in each initial stable matching as otherwise by \cref{lem:capincrease}\eqref{lem:underfilled} we can ignore such school. 
  If an unassigned student would be assigned to a school, then together with the initially assigned student, there are already two, no other assigned or unassigned student would have a justified envy.
  Hence, each assignment vector directly corresponds to a good capacity increase vector.
  In other words, we only need to check whether $\sumcap \ge |\Unas|$.
  
  Now, we turn to the hardness and reduce from the NP-complete \VC\ problem~\cite{GJ79}.
  \decprob{\VC}
  {Graph~$G=(V,E)$ and a non-negative integer~$h$.}
  {Is there a \myemph{vertex cover} of size at most~$h$, i.e., a subset~$V'\subseteq V$ of size at most $h$ such that $G[V\setminus V']$ is edgeless?}
  Note that we can assume that the problem remains NP-hard even if each vertex has degree at most three~\cite{GJS1976Simplified}. 
  Let $I=(G,h)$ be an instance of \VC, where $G=(V,E)$ with $V=\{v_1,\ldots,v_{\enn}\}$ and $E=\{e_1,\ldots,e_{\emm}\}$. 
  We create an instance of our problem as follows.
  For each edge~$e_{t}\in E$, create an edge student~$e_{t}$.
  For each vertex~$v_i\in V$ and each incident edge~$e_{t}$ (i.e., $v_i\in e_{t}$),
  create a school~$v^{t}_i$ with capacity one, along with a dummy student~$d_i^{t}$. 
  For each $i\in [\enn]$, create a school~$w_i$ with capacity one, and a student~$u_i$.
  The preferences and priority lists are as follows: 

  {\centering
    \begin{tabular}{l@{\quad}|@{\quad}l}
      Students & Schools\\
      $e_{t} \colon v^{t}_i\succ v_{j}^{t}$ &  $v_i^{t} \colon d_i^{t}\succ u_i\succ e_{t}$ \\
      $u_i\colon \seqq{\{v_i^{z}\mid v_i \in e_z\}} \succ w_i$ & $w_i \colon u_i$ \\
    $d_i^{t} \colon v_i^{t}$
    \end{tabular}\par}
  
  To complete the construction, let the capacity bound be $\sumcap=\emm+h$.
  It is straightforward to verify that in the constructed instance 
  each school has capacity one and at most three students in its priority list,
  and every student has at most four acceptable schools since each vertex is incident to at most three edges.
  As for the unassigned students, we observe that the student-optimal stable matching assign each~$u_i$ to the corresponding school~$w_i$ and each dummy student~$d_i^{t}$ to the corresponding school~$v_i^{t}$, where $v_i\in V$ and $e_{t}\in E$ such that $v_i\in e_t$.
  This  means that the edge students~$e_t$ are the only unassigned students, each with two acceptable schools.
  
  \begin{claim}\label{claim:sumSTA-bounddeg-VC->good}
    \vcforward
  \end{claim} 
  
    \begin{proof}[Proof of claim~\ref{claim:sumSTA-bounddeg-VC->good}]
    \renewcommand{\qedsymbol}{$\diamond$}
    Let $V'$ be a vertex cover.
    We first define a matching~$\mu$ then we will define the capacity increase vector.
    For each vertex~$v_i\in V$ and all incident edges~$e_t$, let $\mu(d_i^t)=v_i^t$.
    For each vertex~$v_i\in V\setminus V'$, let $\mu(v_i)=w_i$.
    For each vertex~$v_i\in V'$,
    let $\mu(u_i)=v_i^{t}$, where $v_i^t$ is the
    most preferred school of $u_i$,
    and let $\mu(e_t)=v_i^t$ if the other endpoint of $e_t$ is $v_j$ such that either $i<j$ or $v_j\notin V'$ holds.
    Now, we define $\pr$.
    For each~$i\in [\enn]$, let $\pr[w_i]=0$.
    For each~$v_i\in V$ and each incident edge~$e_t\in E$ with $v_i\in e_t$,
    let $\pr[v_i^t]=|\mu^{-1}(v_i^t)|-1$.
    Clearly, $|\pr|_1\le \emm+|V'|$ since exactly $|V'|$ students~$u_i$ get rematched to their most preferred school.
    Further, $\mu$ matches every student since $V'$ is a vertex cover. 
    We claim that matching~$\mu$ is indeed stable.
    Suppose, towards a contradiction, that there is a student~$x$ and a school~$y$ that form a blocking pair.
    Clearly, $x$ cannot be a dummy student since they already obtain their respective most preferred school. If $x$ would be an edge student, than her only better school must be a $v_i^t$ school such that $v_i\notin V'$. Hence, $v_i^t$ has capacity one and is assigned a better student $d_i^t$, contradiction.
    this  means that $x=u_i$ for some $i\in [\enn]$.
    Then, $\mu(u_i)=w_i$ since otherwise $u_i$ already obtains her most preferred school.
    Consequently, $y=v_i^t$ holds for some $t\in [\emm]$
    with $v_i\in e_t$.
    By construction, $v_i\notin V'$ and $v_i^t$ is only assigned a student, namely $d_i^t$ and will not form with~$u_i$ a blocking pair since $\pr[v_i^t]=0$, a contradiction.
    This  concludes that $\pr$ is a good capacity increase vector.
  \end{proof}

  \begin{claim}\label{claim:sumSTA-bounddeg-good->VC}
   \vcbackward
 \end{claim}
  
  \begin{proof}[Proof of claim~\ref{claim:sumSTA-bounddeg-good->VC}]
      \renewcommand{\qedsymbol}{$\diamond$}
      Let $\pr$ be a good capacity increase vector, and let $\mu$ be a stable matching that assigns every student after increasing the capacities according to $\pr$.
      Let $V'=\{v_i \in V \mid \mu(e_t)=v_i^t \text{ for some } e_t \text{ with } v_i\in e_t\}$.
      We claim that $V'$ is a vertex cover of size at most $|\pr|_1-\emm$.
      Clearly, $V'$ is a vertex cover since $\mu$ matches every student, including the edge students.
      This means that there are at least $\emm$ seats are created to accommodate the edge students. 
      To see why $|V'|\le |\pr|_1-\emm$, we observe that for each vertex~$v_i\in V'$,
      there must exist a school~$v_i^t$ such that $\mu(e_t)=v_i^t$.
      However, since $v_i^t$ prioritizes $u_i$ over $e_t$,
      student~$u_i$ must be rematched to some school of the form $v_i^{z}$.
      If there would be more than $|\pr|_1-\emm$ vertices in $V'$, then we would need more than $|\pr|_1-\emm$ additional seats to accommodate the students~$u_i$ with $v_i\in V'$,
      a contradiction.
    \end{proof}
  The correctness follows from the above two claims.
 \end{proof}

In the next part, we show that \sumsta\ is highly inapproximable.

\newcommand{\sumstainapprox}{%
  \sumsta\ does not have any constant-factor approximation algorithm unless P=NP.
  This  holds even if the preference and priority lists are derived from a master list.
}
\newcommand{\sumstaleminapprox}{
  For each constant $d\ge 1$,
  if we can find a good capacity increase vector~$\pr$ with $\lone\le d(k+1)\en$ then we can find a set cover using at most $d(k+1)-1$ sets.
}
\begin{theorem}
\label{thm:sumsta-inapprox}
\sumstainapprox
\end{theorem}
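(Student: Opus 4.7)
The plan is to reduce from \scov, which is NP-hard to approximate within any constant factor (Dinur--Steurer showed it is hard even within $(1-\varepsilon)\ln N$, which exceeds any constant for sufficiently large instances). Given a \scov-instance $(X,\mathcal{S},k)$ with $|X|=n$ and $|\mathcal{S}|=m$, I would design a gadget in which \emph{activating} a set $S$ (i.e.\ raising the capacity of its associated school $w_S$ at all) costs exactly $n$ units of $\lone$, while admitting each of the $n$ element students contributes one further unit, so that the minimum capacity increase has $\lone=(\mathrm{OPT}_{\mathrm{SC}}+1)n$. Setting $\sumcap=(k+1)n$ then makes \sumsta\ correspond exactly to the decision version of \scov.

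Concretely, for each set $S\in\mathcal{S}$ I would create a central school $w_S$ initially saturated by $n$ \emph{blocker} students, together with one dedicated auxiliary school per blocker that absorbs that blocker in the student-optimal stable matching of the base instance. The preferences are tuned so that, once $w_S$'s capacity is raised by one, the propagation of justified envies forces every blocker to reassign, each displacing a student at its new auxiliary school, which in turn must itself gain a seat, ultimately requiring $n$ extra seats per activated set. For each element $x\in X$ I would add an \emph{element student} whose preference list is the schools $w_S$ with $x\in S$ (in the order given by a global master list), and which is initially unassigned and so belongs to $\Unas$. All preference and priority lists are realized as restrictions of a single master list, obtained by enumerating blockers first, then element students, and finally the schools in a consistent canonical order; this yields the ``master list'' strengthening.

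The helper claim \sumstaleminapprox\ is then proved by a two-way counting argument. For the forward direction, a set cover $\mathcal{C}$ of size $\ell$ yields a feasible $\pr$ with $\lone=(\ell+1)n$ by raising $w_S$ and all its auxiliaries for each $S\in\mathcal{C}$, and assigning each element student to a covering set. For the reverse direction, I would use \cref{lemma:minsum-just-envy} to analyze any feasible $\pr$: its ``activated'' central schools must collectively cover $X$ (otherwise some element student has no admitting school), and each activated set incurs its $n$-unit blocker-absorption overhead in $\je(\pv)$, so $\lone \ge (|\mathcal{C}|+1)n$. Hence $\lone\le d(k+1)n$ forces a cover of size $\le d(k+1)-1$, establishing the iff for every $d>1$. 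The inapproximability then follows immediately: a hypothetical polynomial-time $c$-approximation for \sumsta\ applied to this instance would produce $\pr$ with $\lone\le c(\mathrm{OPT}_{\mathrm{SC}}+1)n$, whence by the iff (with $d=c$) we extract a set cover of size at most $c(\mathrm{OPT}_{\mathrm{SC}}+1)-1$, an essentially $c$-approximation for \scov, contradicting its constant-factor inapproximability.

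The main obstacle will be verifying the ``rigid'' lower bound that every activated set genuinely costs $n$ units, and no cheaper resolution of the blocker chain is possible. In particular I must rule out that the master-list ordering inadvertently opens shortcuts (e.g.\ rerouting a blocker through some auxiliary of a different set) and confirm, via \cref{lemma:minsum-just-envy}, that the minimum of $\je(\pv)$ over all assignment vectors~$\pv$ is attained precisely at vectors encoding set covers. Once the gadget is fixed to meet these requirements while remaining master-list consistent, the rest of the argument is essentially arithmetic.
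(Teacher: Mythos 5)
Your plan is essentially the paper's proof: the same reduction from \scov\ (using its constant-factor inapproximability), the same cost accounting ($n$ units to ``activate'' a set school plus $n$ units for the element students, so the optimum equals $(\mathrm{OPT}_{\mathrm{SC}}+1)\cdot n$), the same master-list realization, and the same closing arithmetic converting a constant-factor approximation for \sumsta\ into one for \scov\ via the threshold claim. The one piece that needs repair is the gadget mechanics, which as written are internally inconsistent (the school $w_S$ cannot simultaneously be ``initially saturated by the $n$ blockers'' and have those blockers absorbed by auxiliary schools in the base stable matching) and lean on a fragile envy cascade through the auxiliaries; the paper's cleaner version gives $w_S$ capacity one occupied by a single top-ranked dummy, parks each blocker $u_j^{\ell}$ at a private second-choice school $w_j^{\ell}$, and ranks all blockers above all element students in $w_S$'s priority list, so that admitting any element student to $w_S$ forces all $n$ blockers into $w_S$ itself by stability, making the $n$-unit-per-activated-set lower bound immediate with no propagation argument and no shortcut to rule out.
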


\begin{proof}

We reduce from the NP-hard \scov\ problem.
\optprob{\scov}{
$\hat{m}$ subsets $\mathcal{C}=\{C_1,\dots, C_{\hat{m}}\}$ over $\en$ elements $\mathcal{X}=[\en]$.
}
{
A subfamily~$\mathcal{F}$ of $\mathcal{C}$ with minimum cardinality $|\mathcal{F}|$ such that $\cup_{C_i\in \mathcal{F}}C_i=[\en]$?
}
We remark that Dinur and Steurer \cite{dinur2014analytical} show that the optimum value is not approximable within any constant factor in polynomial-time, unless P=NP. In other words, given an instance $I$ of \scov, where the optimum value is $k$, it is still NP-hard to find a set cover using at most $dk$ sets for any constant $d$.

Let $I$ be an instance of \scov, with $OPT(I)=k$. We create an instance $I'$ of (the optimization variant of) \sumsta\ as follows.
 For each set $C_j$, we create a set school $c_j$ with capacity 1, along with a dummy student $d_j$.
 For each element $i\in [\en]$, we add an element student $e_i$.
For each $j\in [\m]$, we add $n$ students $u_j^1,\dots,u_j^{\en}$ along with schools $w_j^1,\dots,w_j^{\en}$ with capacity 1.

 Let $U_j\coloneqq \{ u_j^1,\dots,u_j^{\en}\}$.
  The preferences and priorities are defined as follows:

  \begin{tabular}{l@{\quad}|@{\quad}l}
      Students & Schools\\
       $e_i \colon \seqq{\mathcal{C}(e_i)}$ &  $c_j \colon d_j\succ \seqq{U_j}\succ \seqq{\mathcal{X}(c_j)}$ \\
    $u_j^{\ell} \colon c_j\succ w_j^{\ell}$ & $w_j^{\ell} \colon u_j^{\ell}$ \\
    $d_j \colon c_j$, &
    \end{tabular}

   \noindent where for each set $c_j$, $\mathcal{X}(c_j)$ denotes the set of students corresponding the elements in $C_j$, and for each student $e_i$, $\mathcal{C}(e_i)$ denotes the set of schools corresponding to the sets which contain~$i$. 
  
It is straightforward to check that all preferences are consistent with the master list $c_1,\dots,c_{\m}$, $w_1^1,\dots,w_1^{\en}$, $\dots,w_{\m}^1,\dots,w_{\m}^{\en}$ for the students and $d_1,\dots,d_{\m},u_1^1,\dots,u_1^{\en}$, $\dots,u_{\m}^1$, $\dots,u_{\m}^{\en}$, $e_1,\dots,e_{\en}$ for the schools.
Each school has initial capacity one.

\begin{claim}
\label{claim:com-HR-minsum-inapprox2}
\sumstaleminapprox
\end{claim}
\begin{proof}
\renewcommand{\qedsymbol}{$\diamond$}
Let $\pr$ be a good capacity vector such that $|\pr |_1\le d(k+1)\en$ and let $\mu$ be a stable matching that assigns every student. For each $c_j$, $d_j$ is matched to it, because they are mutually best for each other. Hence, the original capacity 1 is filled with $d_j$ for each $c_j$. Let $l$ denote the number of schools, where there is a student $e_i$ assigned. As for each $c_j$ all of $u_j^1,\dots,u_j^{\en}$ is better than any $e_i$ student, for each such school, all of their corresponding $u_j^{\ell}$ students must be assigned to them. As $\mu$ assings all $e_i$ students, we get that the capacity increase is at least $\en l+\en$. As $\en l+\en \le d(k+1)\en$, we get that $l\le d(k+1)-1$. As all element agents are matched in $\mu$, the $l$ sets that corresponds to these schools form a set cover.
\end{proof}

\begin{claim}
\label{claim:optinI'}
    For the optimum capacity increase vector $\pr$ in $I'$, we have that $\lone \le k\en +\en$.
\end{claim}
\begin{proof}
Let it be $C_{j_1},\dots,C_{j_k }$ be an optimal set cover. Define $\pr$ as follows: $\pr [w_j^{\ell}]=0$ for all $j\in [\m],\ell \in [\en]$. If $j\in \{ j_1,\dots,j_{k}\}$, $\pr [c_j]=\en+p$, where $p$ is the number of elements $i$ such that $C_j$ is the set with smallest index among $\{ j_1,\dots,j_{k} \}$ that covers $i$. Finally, $\pr [c_j]=0$ otherwise. Clearly, $|\pr |_1\le k\en +\en$.

   We claim that the matching $\mu$, given by the edges $\{ (d_j,c_j) \mid j\in [\m]\} \cup \{ (u_j^{\ell},w_j^{\ell}) \mid j\in [\m]\setminus \{ j_1,\dots ,j_{k}\} ,\ell \in [\en] \} \cup \{ (u_j^{\ell},c_j), (e_i,c_{l_i}) \mid j\in \{ j_1,\dots,j_{k}\} ,i,\ell \in [\en] \}$ is stable, where $l_i$ is the smallest index among $\{ j_1,\dots,j_{k}\}$ such that $i\in C_{l_i}$. $\mu$ is clearly feasible for $\quota +\pr$.

    Suppose there is a blocking pair $(x,y)$ to $\mu$. Student $x$ cannot be $d_j$, as $d_j$ is at her best choice. She also cannot be $u_j^{\ell}$, because for each $j\in [\m]$, either $u_j$ is at her best choice $c_j$ if $j\in \{ j_1,\dots,j_{k}\}$ or her only better school $c_j$ have capacity 1 in $\quota +\pr$, and is filled with a better student $d_j$. Finally, $x$ cannot be $e_i$, because again either $e_i$ is at her best school, or her better schools all have capacity 1 and filled with a better student $d_j$. This  is because $C_{j_1},\dots,C_{j_{k}}$ was a set cover, and for each $e_i$ we matched her the to best school ( which is the one with smallest index) who increased its capacity. Hence, there is no possible choice for a blocking student $x$, contradiction. 
  \end{proof}

Now suppose we have a polynomial-time $d$-approximation algorithm for \sumsta, for some constant $d$. Then by \cref{claim:optinI'}, we can find a good capacity increase vector $\pr$, with $|\pr |_1\le d(k+1)\en$.
From \cref{claim:com-HR-minsum-inapprox2}, we get that we can find a set cover using at most $d(k+1)-1$ sets. As $\frac{d(k+1)-1}{k}\le 2d$, this  algorithm gives us a polynomial $2d$-approximation for the set cover problem, which is a contradiction, if $P\ne NP$.
\end{proof}

Next, we show that \sumsta\ is parameterized intractable. 

\newcommand{\sumstawoneh}{%
  \sumsta\ is W[1]-hard wrt.\ the capacity bound~$\sumcap$.  
}
\begin{theorem}\label{sum-stable:w1h-cap}
  \sumstawoneh
\end{theorem}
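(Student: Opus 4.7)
The plan is to reduce from \textsc{Multi-Colored Clique} (\MCCs) parameterized by the number $k$ of color classes, which is W[1]-hard. Given a $k$-partite graph $G=(V_1\sqcup\cdots\sqcup V_k,E)$, I will build an \manyones-instance in which the number of unassigned students in the student-optimal stable matching is a function of $k$ alone (namely $O(k^2)$). By \cref{lemma:minsum-just-envy} the optimal sum of capacity increases is $\min_{\pv\in\Vecs}\je(\pv)+|\Unas|$, so bounding this by $f(k)$ and choosing $\sumcap=f(k)$ automatically gives a parameter value depending only on $k$.

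The intended construction has three pieces. First, for each vertex $v\in V$ I introduce a \emph{vertex school} $c_v$ with capacity one, initially saturated by a dummy student who is mutually best with $c_v$, so that $c_v$ is full in every stable matching. Second, for each color class $i\in[k]$ I add a single unassigned \emph{vertex-selector} student $s_i$ whose acceptable set is $\{c_v:v\in V_i\}$; assigning $s_i$ to $c_v$ encodes the choice of $v$ as the representative of class~$i$. Third, for every unordered pair $\{i,j\}$ of color classes I add an \emph{edge-verifier} gadget: a small constant number of additional unassigned students together with auxiliary schools whose priority lists are engineered so that the verifier can be placed without generating any extra justified envy exactly when the vertices picked by $s_i$ and $s_j$ are adjacent in $G$. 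With $|\Unas|=k+\binom{k}{2}$ plus $O(k^2)$ verifier students, setting $\sumcap=|\Unas|+c(k)$ for a tailored $c(k)$ and applying \cref{lemma:minsum-just-envy} should translate feasibility into the existence of a multi-colored clique.

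The main obstacle is the \emph{consistency} issue that is typical of W[1]-hardness reductions for matching problems: each school has only a local priority list, so forcing the $(i,j)$-verifier to ``observe'' the actual vertex chosen by $s_i$ (rather than any other vertex of $V_i$) requires a careful cascade of envies. I plan to handle this by extending each vertex-school $c_v$'s priority list with verifier students corresponding to edges incident to $v$: displacing the dummy of $c_v$ by assigning $s_i$ to $c_v$ then creates a prescribed set of potential justified envies that only the \emph{correct} verifier assignments can absorb within the $c(k)$-budget, while every inconsistent choice forces at least one additional justified envy that escapes it. The quantitative lower bound in \cref{lemma:minsum-just-envy}, applied to all $\pv\in\Vecs$, is then used to rule out cheaper alternative reassignments; this exhaustive case analysis, rather than any single clever step, is where I expect most of the work to lie.
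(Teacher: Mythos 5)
Your high-level plan---reduce from \MCCs{} and engineer the instance so that $|\Unas|$ and the budget depend only on the number of color classes---is the right instinct and matches the paper's choice of source problem. However, the proposal has a genuine gap: the edge-verifier gadget, which carries the entire burden of the reduction, is never constructed, and the architecture you propose (vertex-selector students plus adjacency-verifying gadgets) runs into a concrete obstacle that your sketch does not overcome. A verifier student has a fixed acceptable set declared in advance; for its placement to ``generate no extra justified envy exactly when the vertices picked by $s_i$ and $s_j$ are adjacent,'' the school it ends up at must somehow depend on \emph{which} vertices were selected. If the $(i,j)$-verifier's acceptable set contains one school per edge between $V_i$ and $V_j$, it is itself an edge-selector, and you must now enforce consistency between its choice and the choices of $s_i$ and $s_j$; if its acceptable set is smaller, justified envy (a pairwise condition read off a single priority list) cannot distinguish adjacent from non-adjacent selections. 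Your proposed fix---a ``cascade of envies'' triggered by displacing the dummy at $c_v$---does not get off the ground as stated, because a dummy that is mutually best with $c_v$ is never displaced by a capacity increase, and in any case you give no accounting that separates the adjacent from the non-adjacent case by a gap the budget can detect. Since you yourself defer ``most of the work'' to this unbuilt case analysis, the proof is not complete.

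For comparison, the paper's reduction avoids the consistency problem entirely by selecting \emph{edges} rather than vertices. There is one edge-selector student $s_p$ per color pair $p$, acceptable exactly to the edge schools $E^p$; each edge school $e_t$ is kept full by a private student $f_t$ and ranks its two endpoint vertex students above $s_p$. Matching $s_p$ into $e_t$ therefore forces both endpoint vertex students to be rematched to edge schools (else they justifiedly envy $s_p$), and the budget $\sumcap=\binom{h}{2}+h$ pays for exactly $\binom{h}{2}$ selector seats plus $h$ rematched vertex students. Hence the $\binom{h}{2}$ selected edges can span at most $h$ vertices, which forces a multicolored clique---a purely counting-based consistency argument with no verifier gadgets at all. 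If you want to salvage your write-up, the cleanest repair is to drop the vertex-selectors and verifiers and adopt this edge-selection scheme; alternatively, you would need to exhibit the verifier gadget explicitly and prove the quantitative envy gap you are assuming.
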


  \begin{proof}[
 We provide a parameterized reduction from \MCC, which is W[1]-hard wrt.\ the solution size~$h$.

  \decprob{\MCC (\MCCs)}
  {An undirected graph~$G=(V, E)$, a number~$h\in \mathds{N}$, and a partition~$(V_1,\ldots,V_h)$ of the vertices~$V$.}
  {Is there a \emph{multi-colored $h$-clique}, i.e., a complete $h$-vertex subgraph containing exactly one vertex from each set~$V_i$?}

  Let $I=(G, h, (V_1,\ldots,V_h))$ be an instance of \MCCs, where $G=(V, E)$
  with $V=\{v_1,\ldots,v_{\enn}\}$ and $E=\{e_1,\ldots, e_{\emm}\}$.
  We create an instance~$I'$ of \sumsta\ as follows.
  For each pair of colors~$p=\{z, z'\}\subseteq [h]$, create an \myemph{edge-selector student}~$s_p$ and let $E^{p}$ denote the set consisting of the edges whose endpoints belong to $V_z$ and $V_{z'}$, respectively.
  For each edge~$e_{t}\in E$, create a school~$e_{t}$ and a student~$f_{t}$.
  For each vertex~$v_i\in V$, create a  school~$w_i$ and a student~$v_i$.
  Define $U=\{s_{p} \mid p\subseteq [h]\text{ with } |p|=2\}\cup \{f_t\mid e_t \in E\}\cup V$ and $W=E\cup \{w_i\mid v_i\in V\}$.
  In total, we created $\binom{h}{2}+\emm+\enn$ students and $\emm+\enn$ schools.

  The preference lists and the priority orders are defined as follows, where $p$ denotes an arbitrary color pair from $[h]$ and $e_{t}\in E^p$ denotes an edge with endpoints~$v_i$ and~$v_j$.

  {  \centering
    \begin{tabular}{@{}l|l@{}}
      Students & Schools\\
      $f_{t}\colon e_{t}$  & $e_{t}\colon f_{t} \succ \seqq{\{v_i,v_j\}} \succ  s_p$ \\          
       $v_{i}\colon \seqq{\{e_{z}\mid e_{z}\in E \text{ such that } v_i\in e_z\}} \succ w_i$ & $w_{i}\colon v_{i}$\\
       $v_{j}\colon \seqq{\{e_{z}\mid e_{z}\in E \text{ such that } v_j\in e_z\}} \succ w_j$ & $w_{j}\colon v_{j}$\\
        $s_p\colon \seqq{E^p}$ & \\
    \end{tabular}
    \par}
  
  For the sake of reasoning, if $X$ denotes a subset of edges, then the order~$\seqq{X}$ is according to the increasding order of the indices of the edges in $X$.   
  Initially, each school has capacity one. The total capacity increase bound is set to $\sumcap = \binom{h}{2}+h$.
  this  completes the construction of the instance, which can be done in polynomially time.
  
  It remains to show the correctness, i.e., $I$ admits a multi-colored $h$-clique if and only if there is a capacity vector~$\pr$ with $|\pr|_1\le \sumcap$ such that $(U,W,(\succ_x)_{x\in U\cup W}, \pq+\pr)$ admits a stable and perfect matching.
    For the ``only if'' part, let $K$ be a multi-colored $h$-clique of $I$.
  We claim that we can increase the capacities of the schools corresponding to the edges in $E(K)$ with total capacity increase $\sumcap$ to obtain a perfect and stable matching.
  We first define the desired matching~$\mu$, according to which we then define the capacity increase vector.
  For each ``clique-vertex''~$v_i\in V(K)$, let $e_t$ be an edge among all incident ``clique-edges'' $E(K)\cap \{e_z\in E\mid v_i\in e_z\}$ with the smallest index,
  and set $\mu(v_i)=e_t$.
  For each color pair~$p\subseteq [h]$, let $e_t\in E(K)$ be the ``clique-edge'' whose two endpoints have colors in~$p$, and set~$\mu(s_p)=e_t$.
  Finally, for each remaining vertex~$v_i\in V\setminus V(K)$ (resp.\ each edge~$e_t\in E$), set $\mu(v_i)=w_i$ (resp.\ $\mu(f_t)=e_t$).
  Clearly, the resulting matching~$\mu$ is perfect. The capacity increase vector is defined according to~$\mu$.
  More precisely, for each~$e_t\in E(K)$, define $\pr[e_t]=\mu^{-1}(e_t)-1$, and for each remaining school~$x\in W\setminus E(K)$, define $\pr[x]=0$.
  Now, observe that exactly $\binom{h}{2}$ schools increase their capacities from one to at least two.
  These $\binom{h}{2}$ schools admit in total $\binom{h}{2}+\binom{h}{2}+h$ students, including $\binom{h}{2}$ edge-selector students and $h$ students corresponding to the clique vertices.
  Hence, the total capacity increase is indeed $\binom{h}{2}+h$.
  It remains to show that $\mu$ is stable.
  Clearly, all students~$f_t$, $t\in [\emm]$, receive their best choice, so they do not form a blocking pair with any school.
  Neither can any student~$s_p$, $p\subseteq [h]$, be involved in any blocking pair since every school acceptable to~$s_p$ is full and ranks her at the last position.
  Finally, no student~$v_i\in V$ can be involved in any blocking pair since every school that $v_i$ prefers to~$\mu(v_i)$ has capacity one and is already assigned to its most preferred student.
  
  For the ``if'' part, let $\pr$ be a capacity increase vector with $|\pr|_1\le \sumcap$ such that $(U,W, \succ_{x\in U\cup W}, \pq+\pr)$ admits a stable and perfect matching, called~$\mu$.
  First, by stability, $\mu(f_t)=e_t$ holds for all $e_t\in E$ since $f_t$ and $e_t$ are each other's best choice.
  Second, since $\mu$ is perfect, each edge-selector student~$s_p$ has to be matched with some~$e_t\in E^p$.
  Since no two students from $\{s_p\mid p\subseteq [h]\text{ with } |p|=2\}$ have a common acceptable school,
  it follows that at least $\binom{h}{2}$ schools from $E$ have to increase their capacities.
  Moreover, for each $v_i\in \mu(s_p)$, in order to avoid school~$\mu(s_p)$ from forming a blocking pair with~$v_i$ it must hold that $\mu(v_i)\in \{e_z\in E\mid v_i\in e_z\}$.
  This  means that the students corresponding to the endpoints of $\mu(s_p)$ must be matched with some school from $E$.
  By the capacity bound, there can be at most $h$ such students, meaning that $V'=|\{v_i, v_j\mid \{v_i,v_j\}=\mu(s_p) \text{ for some } s_p\}|\le h$.
  This  can happen only if $V'$ forms a clique~$K$ of size $h$.
  That $K$ is multi-colored follows from the fact that each edge-selector students correspond to a distinct pair of colors.

  Since $\sumcap =\binom{h}{2}+h$, it immediately follows that \sumsta\ is W[1]-hard wrt.~$\sumcap$.
\end{proof}

\subsubsection{Algorithmic results}

On the positive side, we give an Integer programming formulation for \sumsta.

\begin{figure}[h]
\label{eq:IP}
\begin{alignat}{3}
\nonumber  \min \sum_{w\in W} r_w & \qquad& &\text{ subject to }\\
  |U|\cdot \sum_{w'\mid w'\succeq_u w}x_{(u,w')}+\sum_{u'\mid u'\succ_w u}x_{(u',w)} \ge \quota [w] + r_w & & &
    \forall (u,w)\in E \\
 \sum_{w\in \acset (u)}x_{(u,w)}=1 &&& \forall u\in U \\
 \sum_{u\in \acset (w)}x_{(u,w)}\le \quota [w] +r_w &&& \forall w\in W \\
     x_{(u,w)} \in \{0,1\}, &&& \forall (u,w)\in E \label{IP:unassign} \\
     r_w\in \mathbb{N} &&& \forall w\in W
\end{alignat}
\caption{An IP formulation for \sumsta.}\label{fig:IP}
\end{figure}

\begin{lemma}
\label{claim:IP}
The optimal solution to the Integer Program in \cref{fig:IP} gives an optimal solution for \sumsta. 
\end{lemma}

  \begin{proof}

Let $\pr$ be an optimal capacity increase vector and let $\stuopt$ be the student-optimal stable matching of $(U,W,(\succ_x)_{x\in U\cup W},\quota +\pr )$. Set $r_w = \pr [w]$ and $x_{(u,w)}=1$, if $(u,w)\in \stuopt$ and $0$ otherwise. Let $(u,w)\in E$. Then, by the stability of $\stuopt$, either $u$ is at a school that is at least as good as $w$ or $w$ is filled with students strictly better than $u$. As $\quota [w]+\pr [w]\le n$ for any $w\in W$ (there can be at most $n$ students at any school), the first inequality is satisfied. As every student is matched and $\stuopt$ is feasible, the second and third inequalities are also satisfied, hence $(\textbf{x},\textbf{r})$ is a solution to the IP and $\sum_{w\in W}r_w = \lone$. So the optimum of the IP is at most $\lone$.

In the other direction, take an optimal solution $(\textbf{x},\textbf{r})$ of the Integer Program. Set $\pr [w] =r_w$ and $\mu [u] = w$, if $x_{(u,w)}=1$. By the second and third inequalities, this gives a feasible matching $\mu$. Suppose that an edge $(u,w)$ blocks $\mu$. Then, $\sum_{w'\mid w'\succeq_u w}x_{(u,w)}=0$ and $\sum_{u'\mid u'\succ_w u}x_{(u,w)}< \quota [w] +r_w$, so the first inequality is not satisfied, contradiction. As $\sum_{w\in W}r_w = \lone$, we get that the optimum of the IP gives an optimal solution to \sumsta.
  


\end{proof}

\noindent Based on \cref{lemma:minsum-just-envy}, we can also give a simple greedy approximation algorithm.

\begin{algorithmic}
 \label{alg:Unasapprox}
\begin{algorithm}

    \caption{$\lceil |\Unas |/c \rceil$-approximation}
\STATE Input: An instance $I$ of \sumsta\
\STATE $\mu :=\emptyset$
\STATE $\Unas:=$ unmatched students in the student-optimal stable matching $\stuopt$
\STATE Set $L:= \Unas$
\STATE Delete the students $u\in \Unas$ from $I$
\WHILE{$L\ne \emptyset$}
\STATE Choose the next (at most) $c$ elements in $L$, add them to $I$ and define $\Vecs$ as in Lemma~\ref{lemma:minsum-just-envy}
\STATE Set $\Unas$ to be the unmatched students of the student-optimal stable matching in $I$
\FOR{all $\pv\in \Vecs$}
\STATE Compute $I^*(\pv),\mnew $ with Algorithm~\ref{alg:procedure} if it exists
\ENDFOR
\STATE Let $\pv \in \Vecstable$ be the vector where the smallest aggregate capacity increase is needed
\STATE Update the capacities according to $I^*(\pv)$ and let $I:=I^*(\pv)$
\STATE $\mu :=\mnew (\pv)$
\ENDWHILE
\STATE \textbf{Return} $\mu$
\end{algorithm}
\end{algorithmic}

\begin{theorem}\label{thm:sumsta-unas-approx}
   \sumsta\ admits an $|\Unas|$-approx.\ algorithm. Furthermore, it admits a polynomial-time $\lceil |\Unas |/c \rceil$-approximation algorithm for any constant $c$. 
\end{theorem}

\begin{proof}



We show that Algorithm 2 
produces an $\lceil |\Unas |/c \rceil$-approximation in polynomial time for any fixed constant $c$. Clearly, in each iteration of the while loop, we have that there are at most $c$ unassigned students in the actual student-optimal stable matching and hence the for the set $\Vecs$, we have that $|\Vecs |\le \studlen^c$. Hence, the running time is $\studlen^c\cdot poly (|E|)$.

To see that this gives a $\lceil |\Unas |/c \rceil$-approximation, observe that in any iteration of the while loop, the additional number of seats required is at most $OPT$. Indeed, at the start of each iteration in the while loop, the capacities are weakly larger than in $I$, hence we know that there is a capacity increase vector $\pr$ with $\lone \le OPT$ such that with capacities $\quota + \pr$ every student can be matched in a stable matching. 

Therefore, the total capacity increase of the algorithm is at most $OPT\cdot \lceil |\Unas |/c \rceil$, since there are $\lceil |\Unas |/c \rceil$ iterations in the while loop.


\end{proof}

By checking all possible assignment vector, we obtain the following simple result.
\begin{theorem}\label{thm:sumsta-XP-fpt-unassiged-len}
\sumsta\ can be solved in $\studlen^{|\Unas|}\cdot (n+m)^{O(1)}$ time and hence is FPT wrt.\ $(|\Unas|,\studlen)$. 
\end{theorem}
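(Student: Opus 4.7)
The plan is to combine the structural characterization in Lemma~\ref{lemma:minsum-just-envy} with straightforward brute-force enumeration. Recall that the lemma asserts the optimal value equals $\min_{\pv\in\Vecs}\je(\pv)+|\Unas|$, where each $\pv\in\Vecs$ records, for every initially unassigned student $u\in\Unas$, one acceptable school $\pv[u]\in\acset(u)$. Since each $u\in\Unas$ has at most $\unmatchlen$ acceptable schools, we immediately get the bound $|\Vecs|\le \unmatchlen^{|\Unas|}$.

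First I would compute the student-optimal stable matching $\stuopt$ in linear time via the deferred-acceptance algorithm~\cite{GusfieldIrving1989}, and read off $\Unas$ from it. Then I would enumerate all $\pv\in\Vecs$ by iterating, for each $u\in\Unas$, over the at most $\unmatchlen$ acceptable schools of $u$. For each enumerated $\pv$, the value $\je(\pv)$ can be evaluated in polynomial time: form the augmented matching $\stuopt+\pv$, then for every initially assigned student $u'\in\As$ check whether there exists $u\in\Unas$ such that $\pv[u]\succ_{u'}\stuopt(u')$ and $u'\succ_{\pv[u]} u$; count the number of such $u'$. This check takes time $O(|\As|\cdot|\Unas|)$, which is polynomial in $n+m$.

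Keeping track of the smallest $\je(\pv)$ encountered, we output $\min_{\pv\in\Vecs}\je(\pv)+|\Unas|$, and answer YES exactly when this value is at most $\sumcap$; a witnessing capacity vector is recovered from the minimizer $\pv^{*}$ and the reassignment procedure described in the proof of Lemma~\ref{lemma:minsum-just-envy}. The total running time is $\unmatchlen^{|\Unas|}\cdot(n+m)^{O(1)}$, establishing both the claimed bound and membership in FPT with respect to the combined parameter~$(|\Unas|,\unmatchlen)$. There is no real obstacle here: the heavy lifting has already been done in Lemma~\ref{lemma:minsum-just-envy}, and the present proof only needs to observe that the search space $\Vecs$ has size at most $\unmatchlen^{|\Unas|}$ and that each $\je(\pv)$ is polynomial-time computable.
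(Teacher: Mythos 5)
Your proposal is correct and follows essentially the same route as the paper's proof: invoke Lemma~\ref{lemma:minsum-just-envy}, enumerate all $\pv\in\Vecs$ (of which there are at most $\unmatchlen^{|\Unas|}$), and compute each $\je(\pv)$ in polynomial time to find the minimizer. The only difference is that you spell out the polynomial-time evaluation of $\je(\pv)$ in slightly more detail, which is harmless.
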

\begin{proof}
  From Lemma \ref{lemma:minsum-just-envy}, we can see that to compute the optimum value (and an optimal capacity increase vector), it is enough to iterate over $\Vecs$, compute $\mnew (\pv)$ if it exists and then choose the best among the matchings that are stable.
  For each $\pv\in \Vecs$, $\je(\pv)$ can be computed in $O(|E|)$ time, where $E$ is the set of all acceptable pairs.
  Furthermore, $|\Vecs|\le \studlen^{|\Unas|}$. 
  Hence, we can iterate through all $\pv\in \Vecs$ and find the optimal solution in time $O(\studlen^{|\Unas|}\cdot |E|)$. 
\end{proof}

\subsection{Minimax capacities}
Although the \minsum\ version is NP-hard and inapproximable, we show that the \minmax\ version can be solved in polynomial-time.

\begin{algorithm}[h]
\begin{algorithmic}

\caption{Algorithm for \maxsta.}
\label{alg:minmax}

    \STATE $\pr [w]:=0$ for all $w\in W$
    \STATE $\mu:=$ student-optimal stable matching
    \WHILE{$\mu $ does not match all students}
    \STATE $\pr [w]=\pr [w]+1$ for all $w\in W$
     \STATE $\mu:=$ student-optimal stable matching with capacities $\quota +\pr$
    \ENDWHILE
\end{algorithmic}
\end{algorithm}
\begin{theorem}
\label{thm:maxsta}
\maxsta\ can be solved, and the corresponding student-optimal stable and perfect matching can be found, in polynomial time.
\end{theorem}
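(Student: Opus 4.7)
The plan is to establish two things for \cref{alg:minmax}: (i) it terminates in polynomial time, and (ii) the returned matching is stable, perfect, and attains the minimum possible $\maxcap$.

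For termination, observe that in each iteration every coordinate of $\pr$ is incremented by one, and once $\pr[w]\ge |U|$ for all $w\in W$, the student-optimal stable matching under $\pq+\pr$ assigns every student to her top choice (any acceptable school has more seats than there are students), hence is perfect. So the loop runs at most $|U|$ times. Since each iteration computes a student-optimal stable matching in linear time~\cite{GusfieldIrving1989} and updates $\pr$ in $O(|W|)$ time, the total running time is polynomial.

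For correctness, let $k$ denote the final value $\pr[w]$ (which is uniform across all~$w$) at termination; the returned matching $\mu$ is stable and perfect by construction, and the associated capacity vector satisfies $\lmax[k\mathbf{1}]=k$, where $\mathbf{1}$ is the all-ones vector. To show $k$ is optimal, I will argue that no $\pr'$ with $\lmax[\pr']\le k-1$ induces a stable and perfect matching. Suppose, for contradiction, there is such a $\pr'$, and let $\mu'$ be a stable and perfect matching in $(U,W,(\succ_x)_{x\in U\cup W},\pq+\pr')$. By the Rural Hospitals theorem, the student-optimal stable matching $\mu_1$ under capacities $\pq+\pr'$ matches the same set of students as $\mu'$, so $\mu_1$ is also perfect. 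Since $\pr'\le (k-1)\mathbf{1}$, we have $\pq+\pr'\le \pq+(k-1)\mathbf{1}$, so by \cref{lem:capincrease}\eqref{lem:improve} every student weakly prefers the student-optimal stable matching $\mu_2$ under $\pq+(k-1)\mathbf{1}$ to $\mu_1$. Because every student strictly prefers any acceptable school to being unmatched, and $\mu_1$ matches everyone, $\mu_2$ must match everyone as well. But then the loop of \cref{alg:minmax} would have exited after the $(k-1)$-st iteration, contradicting the definition of $k$.

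The main subtlety will be the combined use of the Rural Hospitals theorem (to pass from an arbitrary stable and perfect matching under $\pq+\pr'$ to the student-optimal one, which is what \cref{lem:capincrease} compares) and the monotonicity of \cref{lem:capincrease}\eqref{lem:improve}, which is precisely the reason it suffices to raise all capacities uniformly rather than selectively. Everything else (termination bound, the fact that $\mu$ is stable by construction, and running time) is immediate.
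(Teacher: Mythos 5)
Your proof is correct and follows essentially the same route as the paper's: both rest on \cref{lem:capincrease}\eqref{lem:improve} to argue that uniformly raising every capacity by $\maxcap$ dominates any increase vector of $\ell_\infty$-norm at most $\maxcap$, so \cref{alg:minmax} stops exactly at the optimal value. Your write-up is in fact more complete than the paper's — the explicit $|U|$-iteration termination bound and the Rural Hospitals step passing from an arbitrary stable and perfect matching under $\pq+\pr'$ to the student-optimal one are details the paper leaves implicit.
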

\begin{proof}
We show that Algorithm \ref{alg:minmax} returns a student-optimal solution, such that $\lmax$ is minimum. 
For this, we use that increasing a school's capacity weakly improves every student's situation by lemma \ref{lem:capincrease}. 
Hence, if $\pr$ is the optimal such capacity increase vector, with $\lmax=\maxcap$, then it must hold, that if we have $\pr [w]=\maxcap$, for all $w\in W$, then the student-optimal stable matching still matches all students and is best for the students for any capacity increase vector with $\lmax\le \maxcap$. Hence, the algorithm finds the optimal value $\maxcap$, and a student-optimal stable matching for it.
\end{proof}

\section{Stable and Efficient Matchings}\label{sec:SE}

In this  section, we consider finding optimal capacity vectors for stable and efficient matchings and show this  problem is NP-hard, hard to approximate, and parameterized intractable.

\subsection{Minisum capacities}

First, we give an example which will be crucial for the hardness reduction. 
\begin{example}
\label{ex:stable-eff}
The following example admits no efficient and stable matching, even if we allow to increase one capacity by one.
We have five students $u_1,\dots ,u_5$ and five schools $w_1,\ldots,w_5$ each with capacity one, with the following preferences.

{\centering\begin{tabular}{ll|ll}
    $u_1:$ & $w_1\succ w_3\succ w_4$ &  $w_1:$  & $u_5\succ u_3\succ u_2\succ u_1$ \\
    $u_2:$ & $w_1\succ w_2$ & $w_2:$ & $u_2\succ u_5\succ u_3\succ u_4$ \\
    $u_3:$ & $w_2\succ w_1\succ w_3$ & $w_3:$ & $u_3\succ u_4\succ u_1\succ u_5$ \\
    $u_4:$ & $w_2\succ w_3\succ w_5$ & $w_4:$ & $u_1$ \\
    $u_5:$ & $w_3\succ w_2\succ w_1$ & $w_5:$ & $u_4$  \\
    
\end{tabular}
\par}

  If we increase none of the capacities or only increase school $w_4$'s or school $w_5$'s capacity by 1, then the student-optimal stable matching is $\mu =\{ (u_5,w_1),(u_2,w_2),(u_3,w_3),(u_1,w_4),(u_4,w_5)\}$, which is Pareto-dominated by the matching $\mu'=\{ (u_2,w_1),(u_5,w_2),(u_3,w_3),(u_1,w_4),$ $(u_4,w_5)\}$.

If we increase the capacity of school $w_1$ by 1, then the student-optimal stable matching becomes $\mu =\{ (u_2,w_1),(u_3,w_1),(u_5,w_2),$ $(u_4,w_3),(u_1,w_4)\}$, which is Pareto-dominated by $\mu'=\{ (u_2,w_1),$ $(u_3,w_1),(u_4,w_2),$ $(u_5,w_3),(u_1,w_4)\}$.

If we increase the capacity of school $w_2$ by 1, then the student-optimal stable matching is $\mu =\{ (u_2,w_1),(u_3,w_2),$ $(u_5,w_2),(u_4,w_3),$ $(u_1,w_4)\}$, which is Pareto-dominated by $\mu'=\{ (u_2,w_1), (u_3,w_2),$ $(u_4,w_2),(u_5,w_3),(u_1,w_4)\}$.

Finally, if we increase the capacity of school $w_3$ by 1, then the student-optimal stable matching is $\mu=\{ (u_5,w_1),(u_2,w_2),(u_3,w_3), $ $(u_4,w_3),$ $(u_1,w_4)\}$, which is Pareto-dominated by  $\mu'=\{ (u_2,w_1),$ $(u_3,w_2),(u_4,w_3),$ $(u_5,w_3),(u_1,w_4)\}$.

However, if student~$u_5$ is deleted, then $\mu =\{ (u_2,w_1)$, $(u_3,w_2)$, $(u_4,w_3),(u_1,w_4)\}$ is stable, perfect, and efficient, without any capacity increase needed.
\end{example}

Using the gadget given in \cref{ex:stable-eff} and a construction similar to the one for \cref{sum-stable:w1h-cap}, we can show hardness for \sumse.
\newcommand{\thmsumsewoneh}{
  \sumse\ is NP-complete and W[1]-hard wrt.\ the capacity bound~$\sumcap$.
}
\begin{theorem}\label{sum-se:w1h-cap}
  \thmsumsewoneh
\end{theorem}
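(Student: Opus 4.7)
My plan is to give a parameterized reduction from \MCCs\ (W[1]-hard wrt.\ clique size~$h$) that reuses the edge/vertex skeleton from the proof of \cref{sum-stable:w1h-cap} but replaces each edge-selector student~$s_p$ by an embedded copy of the five-student/five-school trouble gadget from \cref{ex:stable-eff}. The essential features of that gadget are (i)~it admits no stable and efficient matching on its own, even after any single capacity increase by one, and (ii)~once the problematic student~$u_5^p$ leaves the gadget, the remaining four students and five schools admit a stable and efficient matching. Giving $u_5^p$ the edges of~$E^p$ as external top choices turns ``discharging $u_5^p$'' into ``selecting an edge of color pair~$p$'', and a tight budget will force the selected edges to form a multi-colored $h$-clique.

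Concretely, for each color pair $p=\{z,z'\}\subseteq [h]$ I would create the gadget~$G_p$ of \cref{ex:stable-eff} (students $u_1^p,\ldots,u_5^p$, capacity-$1$ schools $w_1^p,\ldots,w_5^p$), and prepend $\seqq{E^p}$ to $u_5^p$'s preference list. I would also, for every edge $e_t\in E$, add a capacity-$1$ edge-school $e_t$ and a mutual-best student $f_t$; and, for every vertex $v_i\in V$, a capacity-$1$ school $w_i$ and a student $v_i$, with $v_i\colon \seqq{\{e_t : v_i\in e_t\}}\succ w_i$ and $e_t\colon f_t\succ \seqq{\{v_i,v_j\}}\succ u_5^p$ whenever $e_t=\{v_i,v_j\}\in E^p$. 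Finally I would set $\sumcap=\binom{h}{2}+h$. The initial student-optimal stable matching then assigns each $f_t$ to $e_t$, each $v_i$ to $w_i$, and inside every gadget the unique stable matching of \cref{ex:stable-eff} (note that $u_5^p$ is rejected by every $e_t\in E^p$ because $f_t$ is $e_t$'s top, and falls back to $w_1^p$). Hence $|\Unas|=0$ and therefore $\unmatchlen=0$, as required.

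For the forward direction I would, given a multi-colored $h$-clique~$K$, route each $u_5^p$ to the unique edge $e(p)\in E(K)\cap E^p$ and each clique-vertex $v_i\in V(K)$ to its most preferred incident clique-edge, increasing capacities by exactly $\binom{h}{2}+h$ in total; inside each gadget, with $u_5^p$ removed, \cref{ex:stable-eff} supplies a stable and efficient matching of the remaining four students, and gluing these with the $f_t$--$e_t$ pairs and the vertex rematchings gives a global stable and efficient matching. For the backward direction I would use that a stable and efficient matching must coincide with the student-optimal stable matching of the augmented instance (by \cref{lemma:sumse-inNP}), and then argue per gadget: since no single internal $+1$ increase repairs~$G_p$, each repaired gadget either (a)~discharges $u_5^p$ to some $e_t\in E^p$ (costing one external seat) and is then forced to rematch both endpoints of $e_t$ (they are ranked on $e_t$ above $u_5^p$ and, if left at $w_i$, would block), or (b)~spends at least two extra internal seats inside $G_p$. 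A direct budget comparison rules out option~(b) for any gadget when $h\ge 4$ (WLOG for \MCCs), and a handshake-style inequality on the set~$V'$ of rematched vertex students forces $|V'|=h$ with the selected edges inducing a multi-colored $h$-clique on $V'$.

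The main obstacle I expect is the per-gadget case analysis behind option~(b): \cref{ex:stable-eff} only rules out internal $+1$ repairs, so one must additionally verify that every internal capacity vector making $G_p$ stable and efficient has total size at least~$2$, and that any mixed strategy is dominated by the clean ``all external'' strategy. This is a finite case check on the five schools and five students of $G_p$, using direct Pareto-domination witnesses of the same flavour as those given in \cref{ex:stable-eff}. Once that check is carried out, the construction is tight and yields NP-completeness (with NP-membership from \cref{cor:contained-NP}) and W[1]-hardness of \sumse\ parameterized by $\sumcap=\binom{h}{2}+h$, even when $|\Unas|=\unmatchlen=0$.
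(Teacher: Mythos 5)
Your proposal is correct and follows essentially the same route as the paper: the identical $\MCCs$ skeleton from \cref{sum-stable:w1h-cap} with each edge-selector student replaced by a copy of the \cref{ex:stable-eff} gadget whose fifth student gets $\seqq{E^p}$ prepended, the same budget $\sumcap=\binom{h}{2}+h$, and the same two-case backward analysis (discharge $u_5^p$ externally versus spend at least two internal seats, the latter already covered by \cref{ex:stable-eff} since it excludes all increase vectors of total size at most one). The only cosmetic difference is that the paper settles the mixed-strategy budget comparison via an explicit decreasing function of the number $\ell$ of externally discharged selectors rather than your handshake inequality, but the content is the same.
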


\begin{proof}
  To show hardness, we provide a polynomial-time parameterized reduction from the NP-complete problem \MCC, which is W[1]-hard with respect to the solution size~$h$. Let $I=(G, h, (V_1,\ldots,V_h))$ be an instance of \MCCs, where $G=(V, E)$
  with $V=\{v_1,\ldots,v_{\enn}\}$ and $E=\{e_1,\ldots, e_{\emm}\}$.
  We create an instance~$I'$ of \sumsta\ as follows.
  For each pair of colors~$p=\{z, z'\}\subseteq [h]$, create an \myemph{edge-selector gadget}~$G_p$. These gadgets are copies of \cref{ex:stable-eff} with 5 schools $w_1^p,\dots,w_5^p$ and 5 students $u_1^p\dots, u_5^p$. Let us call the fifth student $u_5^p$ as $s_p$, who will be the \myemph{edge-selector student} for color pair $p$. The preferences are the same, except for each such gadget $G_p$, $s_p$ has other acceptable schools too.
  For each edge~$e_{t}\in E$, create a school~$e_{t}$ and a student~$f_{t}$.
  For each vertex~$v_i\in V$, create a  school~$w_i$ and a student~$v_i$.
  Define $U=\{u_1^p,u_2^p,\dots, u_5^p=s_p \mid p\subseteq [h]\text{ with } |p|=2\}\cup \{f_t\mid e_t \in E\}\cup V$ and $W=E\cup \{w_i\mid v_i\in V\} \cup \{ w_1^p,w_2^p,\dots,w_5^p \mid p\subseteq [h]\text{ with } |p|=2\}$.

  Let $p=\{z,z'\}\subseteq [h]$ be a pair of colors and $E^{p}$ denote the set consisting of the edges whose endpoints belong to $V_z$ and $V_{z'}$, respectively.
  The preference lists of the students and the priority orders of the schools are defined as follows,
  where 
  $e_{t}\in E^p$ denotes an edge with endpoints~$v_i$ and~$v_j$.
  The preferences and priorities of the local students and schools inside the $G_p$ gadgets are not included
  (except for $u_5^p=s_p$, whose preference list is extended non-locally), because they are inherited from \cref{ex:stable-eff}.
  
  {
   \centering
    \begin{tabular}{l@{\quad}|@{\quad}l}
      Students & Schools\\
      $f_{t}\colon e_{t}$  & $e_{t}\colon f_{t} \succ \seqq{\{v_i,v_j\}} \succ  s_p$ \\          
       $v_{i}\colon \seqq{\{e_{z}\mid e_{t}\in E \text{ s.t. } v_i\in e_z\}} \succ w_i$ & $w_{i}\colon v_{i}$\\
       $v_{j}\colon \seqq{\{e_{z}\mid e_{t}\in E \text{ s.t. } v_j\in e_z\}} \succ w_j$ & $w_{j}\colon v_{j}$\\
        $s_p\colon \seqq{E^p} \succ w_3^p\succ w_2^p\succ w_1^p$ & \\
    \end{tabular}\par}

  Initially, each school has capacity one. The total capacity increase bound is set to $\sumcap = \binom{h}{2}+h$.
  this  completes the construction of the instance, which can be done in polynomially time.

    It remains to show the correctness, i.e., $I$ admits a multi-colored $h$-clique if and only if there is a capacity vector~$\pr$ with $|\pr|_1\le \sumcap$ such that $(U,W,(\succ_x)_{x\in U\cup W}, \pq+\pr)$ admits a stable, perfect and efficient matching.

  For the ``only if'' part, let $K$ be a multi-colored $h$-clique of $I$.
  We claim that we can increase the capacities of the schools corresponding to the edges in $E(K)$ with total capacity increase $\sumcap$ to obtain a stable, perfect and efficient matching.
  We first define the desired matching~$\mu$, according to which we then define the capacity increase vector.
  For each ``clique-vertex''~$v_i\in V(K)$, let $e_t$ be an edge among all incident ``clique-edges'' $E(K)\cap \{e_z\in E\mid v_i\in e_z\}$ with the smallest index,
  and set $\mu(v_i)=e_t$.
  For each color pair~$p\subseteq [h]$, let $e_t\in E(K)$ be the ``clique-edge'' whose two endpoints have colors in~$p$, and set~$\mu(s_p)=e_t$. For the other students inside the gadgets we let $\mu (u_2^p)=w_1^p, \mu (u_3^p)=w_2^p, \mu (u_4^p)=w_3^p, \mu (u_1^p)=w_4^p$.
  Finally, for each remaining vertex~$v_i\in V\setminus V(K)$ (resp.\ each edge~$e_t\in E$), set $\mu(v_i)=w_i$ (resp.\ $\mu(f_t)=e_t$).
  Clearly, the resulting matching~$\mu$ is perfect. The capacity increase vector is defined according to~$\mu$.
  More precisely, for each~$e_t\in E(K)$, define $\pr[e_t]=|\mu^{-1}(e_t)|-1$, and for each remaining school~$x\in W\setminus E(K)$, define $\pr[x]=0$.
  Now, observe that exactly $\binom{h}{2}$ schools increase their capacities from one to at least two.
  These $\binom{h}{2}$ schools admit in total $\binom{h}{2}+\binom{h}{2}+h$ students, including $\binom{h}{2}$ edge-selector students and $h$ students corresponding to the clique vertices.
  Hence, the total capacity increase is indeed $\binom{h}{2}+h$.
  It remains to show that $\mu$ is stable and efficient.

  As we have seen in \cref{ex:stable-eff}, $\mu$ restricted to $\{ u_1^p,\dots,u_4^p,w_1^p,\dots,w_5^p\}$ is stable and efficient for any color pair $p$. Also, none of them has any acceptable schools/students outside their corresponding gadget, so there can be no matching, where one of them strictly improves and the others weakly improve.
  All students~$f_t$, $t\in [\emm]$, receive their best choice, so they do not form a blocking pair with any school. It follows that they cannot strictly improve either.
  The students~$s_p$, $p\subseteq [h]$, cannot be involved in any blocking pair too, since every school $e_t$ that is better for~$s_p$ has capacity one and is filled with a better student $f_t$. As $f_t$ has only one acceptable school $e_t$, it follows that $s_p$ could only strictly improve, if $f_t$ is dropped from $e_t$, which would imply that she cannot weakly improve. 
  Finally, no student~$v_i\in V$ can be involved in any blocking pair since every school that $v_i$ prefers to~$\mu(v_i)$ has capacity one and is already assigned to its most preferred student. Similarly to the previous case, this  only student has no other acceptable schools, hence a $v_i$ student could only strictly improve, if an $f_t$ student is dropped out from her school. 

  Therefore, we obtained that $\mu$ is stable, and a student could only strictly improve, if an another student obtains a worse position, hence it is also Pareto-efficient.
  
  For the ``if'' part, let $\pr$ be a capacity increase vector with $|\pr|_1\le \sumcap$ such that $(U,W, \succ_{x\in U\cup W}, \pq+\pr)$ admits a stable, perfect and efficient matching, called~$\mu$.
  First, by stability, $\mu(f_t)=e_t$ holds for all $e_t\in E$ since $f_t$ and $e_t$ are each other's best choice.
  Second, since $\mu$ is perfect, stable and efficient, each edge-selector student~$s_p$ has to be matched with some~$e_t\in E^p$ or if not, then the sum of the capacity increases in the schools of gadget $G_p$ is at least 2. 
  Let the number of edge selector students who are assigned to a school $e_t\in E^p$ be $\ell$.
  Since no two students from $\{s_p\mid p\subseteq [h]\text{ with } |p|=2\}$ have a common acceptable school,
  it follows that at least $\ell$ schools from $E$ have to increase their capacities. 
  Moreover, for each $v_i\in \mu(s_p)$, in order to avoid school~$\mu(s_p)$ from forming a blocking pair with~$v_i$ it must hold that $\mu(v_i)\in \{e_z\in E\mid v_i\in e_z\}$.
  this  means that the students corresponding to the endpoints of $\mu(s_p)$ must be matched with some school from $E$.

  Summarizing, we get that the sum of increase in the capacities is at least $2(\binom{h}{2}-\ell )+\ell +\frac{1+\sqrt{1+8\ell}}{2}$, because there are $\binom{h}{2}-\ell$ gadgets, where the sum of increases is at least two, and the other $\ell$ edge selector students are matched to $\ell$ schools from $E$, and $\ell$ edges are incident to at least $\frac{1+\sqrt{1+8\ell}}{2}$ vertices (as $x$ vertices can only span $\binom{x}{2}$ edges).  As the derivative of $-\ell +\frac{1+\sqrt{1+8\ell}}{2}$ is $-1+\frac{2}{\sqrt{1+8\ell}}<0$, whenever $\ell \ge 1$, we get that $-\ell +\frac{1+\sqrt{1+8\ell}}{2}$ is strictly decreasing over the interval $[1,\binom{h}{2}]$. Therefore, it is minimal at $\ell =\binom{h}{2}$ or $\ell =0$. Also, if $\ell =\binom{h}{2}$, then $2(\binom{h}{2}-\ell )+\ell +\frac{1+\sqrt{1+8\ell}}{2}=\binom{h}{2}+h=\sumcap$ and if $\ell=0$, then the needed capacity increase is at least $2\binom{h}{2}>\sumcap$, as $h\ge 2$. Hence, all edge selector students must be matched to $e_t$ schools, and the corresponding $\binom{h}{2}$ edges must be adjacent to exactly $h$ vertices. This  can happen only if these vertices form a clique~$K$ of size $h$.
  That $K$ is multi-colored follows from the fact that each edge-selector students correspond to a distinct pair of colors.

  This concludes the proof for NP-hardness result. Since $\sumcap=\binom{h}{2}+h$, it immediately follows that \sumse\ is W[1]-hard wrt.\ $\sumcap$.
\end{proof}

Note that in the proof of Theorem~\ref{sum-se:w1h-cap}, the constructed instance has a perfect and stable matching.
It follows that \sumse\ is NP-hard even if $|\Unas |=\studlen =0$. 

\begin{corollary}
\sumse\ cannot be $\studlen$-approximated and cannot be $|\Unas|$-approximated if $P\ne NP$.
\end{corollary}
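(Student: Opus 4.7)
The plan is to invoke Theorem~\ref{sum-se:w1h-cap} essentially as a black box, together with the observation that in the reduction produced there both parameters collapse to zero. Concretely, I note that in the constructed instance the student-optimal stable matching is already perfect (each $f_t$ is matched to $e_t$, each vertex-student $v_i$ to $w_i$, and the gadget students are matched inside their gadgets), so $|\Unas|=0$, and therefore, by the usual convention that the maximum over an empty set is $0$, also $\unmatchlen=0$. The decision problem on these instances thus becomes: does the given \manyones\ instance admit a stable and efficient matching with \emph{zero} capacity increase, and this decision problem is NP-hard by Theorem~\ref{sum-se:w1h-cap}.

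Assuming $\mathsf{P}\neq \mathsf{NP}$, I would then argue by contradiction. Suppose there exists a polynomial-time $|\Unas|$-approximation algorithm $\mathcal{A}$ for \sumse. Run $\mathcal{A}$ on an instance produced by the reduction of Theorem~\ref{sum-se:w1h-cap}. By the approximation guarantee, $\mathcal{A}$ returns a capacity-increase vector $\pr$ with
\[
  |\pr|_1 \;\le\; |\Unas|\cdot \opt \;=\; 0\cdot \opt \;=\; 0.
\]
Since capacity increases are non-negative integers, this forces $\pr = \mathbf{0}$. Feasibility of $\pr$ can be checked in polynomial time: compute the student-optimal stable matching of the unchanged instance (linear time by~\cite{GusfieldIrving1989}) and test whether it is efficient (polynomial time by \cref{lemma:sumse-inNP}). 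Thus $\mathcal{A}$, together with this check, decides in polynomial time whether $\opt=0$ on instances of the reduction, contradicting the NP-hardness asserted in Theorem~\ref{sum-se:w1h-cap}. The same argument, word for word with $\unmatchlen$ in place of $|\Unas|$, rules out a polynomial-time $\unmatchlen$-approximation.

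I do not foresee any real obstacle: the only subtlety is the convention that $\unmatchlen = 0$ when $\Unas = \emptyset$, which is consistent with the statement "$|\Unas|=\unmatchlen=0$" appearing at the end of Theorem~\ref{sum-se:w1h-cap}. Everything else is a one-line reduction: a sub-unit approximation factor on a non-negative integer objective collapses to exact solvability, which is precluded by the NP-hardness statement. I would therefore keep the proof to one short paragraph in the final write-up.
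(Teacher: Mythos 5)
There is a genuine gap: you have misidentified which decision problem \cref{sum-se:w1h-cap} makes hard. That theorem does \emph{not} say that deciding ``does the instance admit a stable and efficient matching with zero capacity increase'' is NP-hard --- that question is whether $\opt=0$, and it is polynomial-time solvable by \cref{lemma:sumse-inNP} (a fact you yourself invoke for your feasibility check, which should have raised a flag). In the reduction of \cref{sum-se:w1h-cap}, the conditions $|\Unas|=0$ and $\unmatchlen=0$ are properties of the constructed \emph{instances}, while the NP-hard question is posed at the nonzero budget $\sumcap=\binom{h}{2}+h$; indeed $\opt\ge 1$ on every instance the reduction produces, because the gadgets guarantee that no stable and efficient matching exists without any capacity increase. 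Consequently your chain ``$\mathcal{A}$ returns $\pr=\mathbf{0}$, check feasibility, hence decide $\opt=0$, contradiction'' decides an easy problem and contradicts nothing. (It is also internally shaky: on instances with $\opt>0$ no feasible vector of value $0$ exists, so an algorithm literally guaranteeing $|\pr|_1\le 0\cdot\opt$ cannot return a feasible solution at all, which makes the statement vacuously true without any complexity assumption --- clearly not the intended content.)

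The intended argument, which is what the paper's one-sentence justification before the corollary amounts to, is this: on the hard instance family, $|\Unas|=\unmatchlen=0$, so an approximation guarantee of factor $|\Unas|$ (or $\unmatchlen$) degenerates --- under the standard convention that a $c$-approximation returns a feasible solution of value at most $\max(1,c)\cdot\opt$ --- to the requirement that the algorithm return an \emph{optimal} solution on these instances. A polynomial-time algorithm that outputs an optimal capacity increase vector $\pr$ lets you compare $|\pr|_1$ with $\sumcap=\binom{h}{2}+h$ and thereby decide the question shown NP-hard in \cref{sum-se:w1h-cap}. So the contradiction must be derived from exactness at the threshold $\sumcap$, not from deciding $\opt=0$. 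Your write-up needs to be corrected accordingly; as it stands the final step fails.
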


\begin{remark}\label{rem:SumSE-hard-preflength}
By using \cref{ex:stable-eff} as a gadget with the proof of \cref{thm:sumsta-inapprox} instead in an analogous way (substitute the originally unassigned students $e_i$-s with the gadget of \cref{ex:stable-eff}), we can show that \sumse\ remains NP-complete with constant preference and priority lengths too. (We also use the fact that \scov\ remains NP-hard even if each set has size at most three and each element appears in exactly three sets, so the degrees will be bounded by a constant).
\end{remark}

\subsection{Minimax capacities}

We start with an example to illustrate the difficulties with \maxse\ opposed to \maxsta.

\begin{example}
While \maxsta\ can be solved easily with \cref{alg:minmax}, it does not give an optimal solution for \maxse. Consider the instance in \cref{ex:stable-eff}. Add 10 more schools $w_1',\dots w_5',v_1,\dots,v_5$. The $w_1',\dots,w_5'$ schools rank the $u_i$ students in the same way as $w_1,\dots,w_5$. Furthermore, for each $u_i$ we extend her preference list by first appending $v_i$ and then the $w_j'$ schools for each $j$ such that $w_j$ is acceptable to $u_i$, to the end of her preference list in the same order as the acceptable $w_j$ schools. Finally, for each of $\{ w_1,\dots,w_5,v_1,\dots,v_5\}$, add a dummy student, such that each school and its dummy student mutually consider each other best.
The preferences are (excluding the dummy students in the preference lists):
\begin{center}
\begin{tabular}{ll|ll}
    $u_1:$ & $w_1\succ w_3\succ w_4\succ v_1 \succ w_1'\succ w_3'\succ w_4'$ &  $w_1,w_1':$  & $u_5\succ u_3\succ u_2\succ u_1$ \\
    $u_2:$ & $w_1\succ w_2\succ v_2\succ w_1'\succ w_2'$ & $w_2,w_2':$ & $u_2\succ u_5\succ u_3\succ u_4$ \\
    $u_3:$ & $w_2\succ w_1\succ w_3\succ v_3\succ w_2'\succ w_1'\succ w_3'$ & $w_3,w_3':$ & $u_3\succ u_4\succ u_1\succ u_5$ \\
    $u_4:$ & $w_2\succ w_3\succ w_5\succ v_4\succ w_2'\succ w_3'\succ w_5'$ & $w_4,w_4':$ & $u_1$ \\
    $u_5:$ & $w_3\succ w_2\succ w_1\succ v_5\succ w_3'\succ w_2'\succ w_1'$ & $w_5,w_5':$ & $u_4$ 
    \\
    & & $v_i:$ $u_i$
    
\end{tabular}
\end{center}

Each capacity is one.
It is clear that there is no stable and efficient matching without capacity increases: each school with a dummy student must admit only its dummy student and $u_1,\dots,u_5$ cannot be assigned in a stable and efficient way to $w_1',\dots,w_5'$ as we have seen in \cref{ex:stable-eff}. There is an optimal capacity increase vector, where the maximum increase is one: increase each of $v_1,\dots,v_5$'s capacity by one and the others by zero. Then, if we assign $u_i$ to $v_i$ it will be both stable and Pareto-efficient. However, if we increase all school capacities by one, then all of $u_1,\dots,u_5$ must be assigned to a school among $w_1,\dots,w_5$. But as we have seen in \cref{ex:stable-eff}, it is not possible to do in a stable and efficient way.
\end{example}

Finally, we show that \maxse\ is also NP-complete and inapproximable, which is in contrast with our results for \maxsta.

\newcommand{\clmmaxseforward}{%
  If there is good capacity increase vector $\pr$ with $\lmax\le \eta$ or $|\pr|_1\le \eta$, then there is a satisfying assignment $f$. 
}
\begin{theorem}\label{thm:maxse}
    \maxse\ is NP-complete, even if $\maxcap =3$ and each preference and priority list has length bounded by a constant. Furthermore, neither \maxse\ nor \sumse\ admit a polynomial-time $\mathcal{O}((|U|+|E|)^{1-\varepsilon})$-approximation algorithm for any $\varepsilon >0$ if $P\ne NP$, even if the preference lists of the students are at most 5 long.
\end{theorem}
\begin{proof}
Before we start the proof, let us describe the two gadgets that we are going to use. Let $\eta \ge 3$ be a positive integer number specified later.

\smallskip
\noindent\textbf{Clause gadget.} We have 3 students $u_1,u_2,c$ and 3 schools $w_1,w_2,w_3$ along with $2\eta$ dummy students $d_1^1,\dots,d_1^{\eta},d_2^1,\dots,d_2^{\eta}$ and $2\eta$ dummy schools $s_1^1,\dots,s_1^{\eta},s_2^1,\dots,s_2^{\eta}$. The preferences are:

  {
    \centering\begin{tabular}{ll|ll}
    $u_1:$ & $w_1\succ w_2$ &  $w_1:$  & $c\succ u_1\succ d_1^1\succ \cdot \succ d_1^{\eta}\succ  u_2$ \\
    $u_2:$ & $w_1\succ w_2\succ w_3$ & $w_2:$ & $u_1\succ u_2\succ d_2^1\succ \cdots \succ d_2^{\eta} \succ c$ \\
    $c:$ & $w_2\succ w_1$ & $w_3:$ & $u_2$ \\
    $d_i^j:$ & $w_i\succ s_i^j$ & $s_i^j:$ & $d_i^j$, \\
\end{tabular}
\par}

for $i\in [2], j\in [\eta ]$.
The initial capacities of all schools are one. The crucial property of this  gadget is that if each school has capacity at most $\eta +1$ (and at least one), then in the student-optimal stable matching $c$ is matched to $w_1$ and either $u_1$ is matched to $w_2$ (if $w_1$ has capacity one) or $u_2$ is matched to $w_2$ (if $w_1$ has capacity more than one). In either case, the resulting matching is not Pareto-efficient, as student $c$ would mutually like to switch places with $u_1$ or $u_2$ (the one at $w_2$).

However, if student $c$ would be matched out from the gadget to a better school, than there is a stable and Pareto-efficient matching without any capacity increase needed, $\mu =\{ (u_1,w_1),(u_2,w_2)$, $(d_i^j,s_i^j)\mid i\in [2],j\in [\eta  ]\}$.

\medskip
\noindent \textbf{Variable gadget.} We have 5 students $p_1,p_2,p_3,T,F$ and 5 schools $z_1,z_2,z_3,z_4,z_5$ along with $2\eta $ dummy students $e_1^1,\dots,e_1^{\eta},e_2^1,\dots,e_2^{\eta}$ and $2\eta$ dummy schools $t_1^1,\dots,t_1^{\eta},t_2^1,\dots,t_2^{\eta}$. The preferences and priorities are:
\begin{center}
\begin{tabular}{ll|ll}
    $p_1:$ & $z_1\succ z_2$ &  $z_1:$  & $T\succ F\succ p_3\succ e_1^1\succ \cdots \succ e_1^{\eta}\succ  p_1\succ p_2$ \\
    $p_2:$ & $z_1\succ z_2\succ z_3$ & $z_2:$ & $p_1\succ e_2^1\succ \cdots \succ e_2^{\eta} \succ p_2\succ p_3$ \\
    $p_3:$ & $z_2\succ z_1\succ z_4$ & $z_3:$ & $p_2$ \\
    $T:$ & $z_1$ &   $z_4:$ & $p_3$\\
    $F:$ & $z_1\succ z_5$ &  $z_5:$ & $F$ \\
    $e_i^j:$ & $z_i\succ t_i^j$ & $t_i^j:$ & $e_i^j$, 
\end{tabular}
\end{center}
\noindent where $i\in [2], j\in [\eta ]$.

\noindent Each initial capacity is one. The crucial observation of this  gadget is that if both $T$ and $F$ would be matched outside the gadget to a better school, and each school inside has capacity at most $\eta +1$, then in the student-optimal stable matching, $p_3$ is matched to $z_1$ and $p_1$ is matched to $z_2$. Hence, $p_1$ and $p_3$ both envy the other one's school, so the matching is not Pareto-efficient. However, if one of $T$ and $F$ is matched inside the gadget, then the student-optimal stable matching without any capacity increase matches her to $z_1$ (in the case when both $T$ and $F$ is matched inside then $T$ is at $z_1$ and $F$ is at $z_5$), while it matches $p_1$ to $z_2$, $p_2$ to $z_3$, $p_3$ to $z_4$ and each $e_i^j$ to $t_i^j$ and it is efficient.

\smallskip
\noindent \textbf{The reduction.}
Now we provide our reduction from \rrsat.
\decprob{\rrsat}{
  A Boolean 3CNF formula~$\varphi = \{C_1\land \cdots \land C_{\emm}\}$ over a set~$X_1,\ldots,X_{\enn}$ of variables such that
  each clause~$C_j$ has three literals,
  and each variable $X_i$ appears exactly twice in positive and exactly twice in negated form.
}{
Does $\varphi$ admit a satisfying assignment?
}

Let $I=\varphi$ be an instance of \rrsat, with clauses $C_1,\dots,C_{\hat{m}}$ and variables $X_1,\dots,X_{\hat{n}}$. Note that $3\hat{m}=4\hat{n}$. We create an instance $I'$ of \maxse\ and an instance $I''$ of \sumse\ as follows.

\begin{compactitem}[--]
    \item For each clause $C_j$, $j\in [\hat{m}]$, we create a clause gadget $G_1^j$ . Let us denote the student $c$ in gadget $G_1^j$ by $c^j$.
    \item For each variable $X_i$, we create a variable gadget $G_2^i$. Let us call the $T$ and $F$ students in $G_2^i$ by $T^i$ and $F^i$.
    \item Finally, for each variable $X_i$, create two literal schools $x_i$ and $\overline{x_i}$ each with one dummy student $y_i$ and $\overline{y_i}$ respectively.
\end{compactitem}
In total, we have $(3+2\eta )\hat{m}+(7+2\eta )\hat{n}$ schools and $(3+2\eta )\hat{m}+(7+2\eta )\hat{n}$ students. As $3\hat{m}=4\hat{n}$, both these quantities are $\mathcal{O}(\eta \hat{n}).$

For each literal school $x_i$, its priority list is $y_i\succ T^i\succ c^1(x_i)\succ c^2(x_i)$, where $c^l(x_i)=c^j$, if and only if the $l$-th occurrence of literal $X_i$ is in $C_j$. For each literal school $\overline{x_i}$, its priority list is $\overline{y_i}\succ F^i\succ c^1(\overline{x_i})\succ c^2(\overline{x_i})$, where $c^l(\overline{x_i})=c^j$, if and only if the $l$-th occurrence of literal $\overline{X_i}$ is in $C_j$. The dummy students $y_i,\overline{y_i}$ only consider their corresponding school acceptable.

Every other student and school inherits its preference/priority list from her/its corresponding gadget, with the following extensions.  
For each gadget $G_1^j$, $j\in [\hat{m}]$, we extend $c^j$'s preference list by adding $\ell_{j_1}\succ \ell_{j_2}\succ \ell_{j_3}$ to the beginning of her preference list, where $\ell_{j_1},\ell_{j_2},\ell_{j_3}$ corresponds to the three literal schools corresponding to the literals in $C_j$. Finally, for each student $T^i$, we add $x_i$ as her best school and for each student $F^i$, we add $\overline{x_i}$ as her best school. 


Each initial capacity is one. Furthermore $\maxcap =3$ for $I'$ and $\sumcap =3\hat{n}$ for $I''$.

\begin{claim}
\label{claim:maxse1}
If there exists a satisfying assignment, then there is a good capacity increase vector $\pr$ with $\lone= 3\hat{n}$ and $\lmax=3$.
\end{claim}
\begin{proof}
\renewcommand{\qedsymbol}{$\diamond$}
Let $f$ be a satisfying assignment. First we describe $\pr$. For each $i\in [\hat{n}]$ if $X_i$ is set to True, then let $\pr [x_i]=3$ and $\pr [\overline{x_i}]=0$, otherwise $\pr [\overline{x_i}]=3$ and $\pr [x_i]=0$. For all other schools $s$, $\pr [s]=0$. Clearly, $|\pr|_1= 3\hat{n}$ and $\lmax=3$. Let $\mu$ be the following matching. Match each dummy student $y_i,\overline{y_i}$, $i\in [\hat{n}]$ to her dummy school.
If $X_i$ is set to True, then we match $T^i$ to school $x_i$, otherwise we match $F^i$ to school $\overline{x_i}$. Then, for each clause $C_j$, let $\ell_{i_j}$ be the first literal in $C_j$ that is true and match student $c^j$ to the literal school of $\ell_{i_j}$. $\mu$ is feasible for $\quota +\pr$, because each variable appears exactly twice in both positive and negative form, hence each literal school has only 4 adjacent student, including its dummy one. So, even if we match both its clause $c^j$ students to it, its capacity $1+3$ in $\quota +\pr$ is not violated.

As $f$ was a satisfying assignment, for each $j\in [\hat{m}]$, $c^j$ is matched outside of the gadget $G_1^j$ to a better school, hence as we have seen before, the rest of the students inside can be matched in a stable and Pareto-efficient way. As $f$ is consistent, for each $i\in [\hat{n}]$, exactly one of $T^i$ and $F^i$ is matched outside of $G_2^i$ to a better school, therefore as we have seen, the remaining students inside can be matched in a stable and Pareto-efficient way. 

We claim that the resulting matching is stable and Pareto-efficient. For stability, we show that no edge that is not induced by a gadget blocks. Clearly, no $(y_i,x_i),(\overline{y_i},\overline{x_i})$ edge blocks. No edge of type $(c^j,\ell_i)$, $\ell \in \{ x,\overline{x} \}$ can block, because we matched $c^j$ to the best school that increased its capacity and the other ones are filled with their best (dummy) student. Finally, no edge of type $(T^i,x_i)$ or $(F^i,\overline{x_i})$ can block, because either it is included in the matching, or if not, then the school is filled with its best student. For Pareto-efficiency, first note that the matching is Pareto-efficient inside the gadgets.
It is easy to check that for any student who has acceptable schools outside her gadget and is not at her best school, all her better schools are filled with students who are at their best school. Hence, a student could only improve, if another one disimproves, so the matching is also efficient. 
\end{proof}

\begin{claim}
\label{claim:maxse2}
\clmmaxseforward
\end{claim}

  \begin{proof}
\renewcommand{\qedsymbol}{$\diamond$}
Suppose there is a good capacity increase vector $\pr$ with $\lmax\le \eta$ and a stable and efficient matching $\mu$ with respect to capacities $\quota +\pr$. Then, it must hold that for each gadget $G_1^j$, $c^j$ is matched to a literal school, as otherwise there is no stable and Pareto-efficient matchings inside $G_1^j$ with capacity increases at most $\eta$. Furthermore, if $c^j$ is matched to a literal school $\ell_i$, $\ell \in \{ x,\overline{x} \}$, then because $\ell_i$ likes its corresponding $T^i$ or $F^i$ student better, who in return considers $\ell_i$ her best school, it must hold that she is also matched to $\ell_i$ in $\mu$, by the stability of $\mu$. Therefore, if there would be a variable $X_i$, such that both $x_i$ and $\overline{x_i}$ has a $c^j$ resident from a gadget $G_1^j$, then both $T^i$ and $F^i$ must be matched outside $G_2^i$ to a better school. Hence, inside the gadget $G_2^i$, there would be no stable and efficient matching with capacity increases at most $\eta$, as we observed before.
This implies, that if we define a truth assignment such that $X_i$ is set to True, if $F^i$ is not matched to $\overline{x_i}$ and it is set to False otherwise, then this  assignment is consistent, and for each clause, there is a true literal. 

Now, if there is a good capacity increase vector with $|\pr |_1\le \eta$, then $\lmax\le \eta$ and the same argument show the existence of a satisfying assignment $f$.
\end{proof}

If we let $\eta =3$, then it proves the first statement of the theorem, that is \maxse\ is NP-complete, even if the preference and priority lists are at most 8 long. The longest priority list is school $z_1$'s priority list in the $G_2^i$ gadgets, which is then 8 long. The longest preference list is the student $c^j$'s preference list, which is 5 long.

For the other statement, let $\eta =\hat{n}^c$, for some arbitrary constant $c>0$. Clearly, $\eta$ is polynomial in the input size, hence the reduction is still polynomial. Therefore, by \cref{claim:maxse1,claim:maxse2} it follows that it is NP-hard to decide whether the optimum is at most $3\hat{n}$ or at least $\eta =\hat{n}^c$ for both \maxse\ and \sumse\ for any $c>0$. As in $I'$ and $I''$ we have that $|U|,|E|=\mathcal{O}(\eta \hat{n})=\mathcal{O}(\hat{n}^{c+1})$, this  implies that there is no polynomial time $\mathcal{O}((|U|+|E|)^{\frac{c-1}{c+1}})$-approximation algorithm for neither \sumse\ nor \maxse\ if $P\ne NP$. As $c$ can be an arbitrarily large constant, and $\frac{c-1}{c+1}$ tends to $1$, as $c$ goes to $\infty$, this  is equivalent to saying that there is no $O((n+m)^{1-\varepsilon})$-approximation algorithm, for any $\varepsilon >0$. Also, as increasing $\eta$ does not increase the length of the students' preference lists, this  inapproximability result holds even if each preference list is at most 5 long.
\end{proof}

\section{Student-popular matchings}\label{sec:STUPOP}

In this section we consider the problems \sumstud\ and \maxstud, that is, we aim for a stable and student-popular matching. 
Recall our following simple observation from \cref{lemma:sumse-inNP}.

\begin{observation}
If there is a student-popular stable matching, then it must be the student-optimal stable matching. 
\end{observation}

We also state another simple, but crucual observation. 
\begin{observation}
If in a matching $\mu$, any student $u$ could only improve by replacing a student $u'$ at her best school, then $\mu$ is student-popular.
\end{observation}
Indeed, in this case, for any improving student $u$, there would be a disimproving student $u'$, who must leave her best school, so the sum of the votes of the students is always nonnegative for any other matching. 

\subsection{Minisum capacities}

We start by giving an example.

\begin{example}
\label{ex:studpop}
Let $\eta$ be a positive integer. 
We give an example, where there is no student-popular stable matching, even if we are allowed to increase each capacity by $\eta$. 
We have $3+2\eta$ students $u_1,u_2,e,d_1^1,\dots, d_1^{\eta },d_2^1,\dots,d_2^{\eta}$ and $3+2\eta$ schools $w_1,w_2,w_3,s_1^1,\dots,s_1^{\eta},s_2^1,\dots,s_2^{\eta}$ each with capacity one. The preferences are
\begin{center}
\begin{tabular}{ll|ll}
    $u_1:$ & $w_1$ &  $w_1:$  & $u_1\succ d_1^1\succ \cdots \succ d_1^{\eta} \succ e$ \\
    $e:$ & $w_1\succ w_2$ & $w_2:$ & $e\succ u_2\succ d_2^1\succ \cdots \succ d_2^{\eta}$ \\
    $u_2:$ & $w_2\succ w_3$ & $w_3:$ & $u_2$ \\
    $d_i^j:$ & $w_i\succ s_i^j$ & $s_i^j:$ & $d_i^j$

\end{tabular}
\end{center}
If we are only allowed to increase the capacities by at most $\eta$, then in the student-optimal stable matching, $e$ is at $w_2$ and $d_2^{\eta}$ is not at $w_2$. Hence, $d_2^{\eta}$ could get $e$'s place and improve, while $e$ could go to $w_1$ in $u_1$'s place and also improve, while only $u_1$ disimproves, meaning that the matching is not student-popular.  
However, if $e$ would be assigned outside to a better school, then the matching $M=\{ (u_1,w_1),(u_2,w_2),(s_i^j,d_i^j) \mid i\in [2],j\in [\eta ] \}$ is stable and student-popular as any student could only improve by replacing a student who is at her first place, so the number of disimproving students would be at least as much.

\end{example}

We start with the \minsum\ variant and show that \sumstud\ is NP-hard.

\begin{theorem}
\label{thm:sumstud}
\sumstud\ is NP-complete and admits no polynomial-time constant or $|\Unas |$-approximation algorithm, if $P\ne NP$. It is NP-hard even if $|\Unas |=0$.
\end{theorem}
\begin{proof} 
NP-containment follows from Lemma \ref{lemma:sumse-inNP} as in the previous problems. 

We reduce from the NP-hard \scov\ problem. As stated before in \cref{sec:SP+sum}, for any constant $d$, it is still NP-hard to approximate it within a factor of $d$. 

Let $I$ be an instance of \scov with sets $C_1\dots, C_{\m}$ and elements $\mathcal{X}=\{ 1,\dots,\en \}$ with $OPT(I)=k\le \m$. We create an instance $I'$ of \sumstud\ as follows.
\paragraph{The students and schools}
\begin{itemize}

\item[--] For each set $C_j$, $j\in [\m]$ we create a set school $c_j$ with capacity 1, along with a dummy student $d_j$.
\item[--]  For each element $i\in [\en]$, we add a gadget $G^i$ with $\eta =\m (\m+1)\en$, which is a copy of \cref{ex:studpop}. We call $e_i$ the student in $G^i$ who corresponds to $e$ in \cref{ex:studpop}.
\item[--] For each $j\in [\m]$, we add $\en$ students $u_j^1,\dots,u_j^{\en}$ along with schools $w_j^1,\dots,w_j^{\en}$ with capacity 1.  Let $U_j\coloneqq \{ u_j^1,\dots,u_j^{\en}\}$.

\end{itemize}
\paragraph{The preferences and priorities}
  The preferences and priorities are defined as follows:

  \begin{tabular}{l@{\quad}|@{\quad}l}
  \centering
      Students & Schools\\
       $e_i \colon \seqq{\mathcal{C}(e_i)} \succ G^i$ &  $c_j \colon d_j\succ \seqq{U_j}\succ \seqq{\mathcal{E}(C_j)}$ \\
    $u_j^{\ell} \colon c_j\succ w_j^{\ell}$ & $w_j^{\ell} \colon u_j^{\ell}$ \\
    $d_j \colon c_j$, &
    \end{tabular}

   \noindent where for each set $C_j$, $\mathcal{E}(C_j)$ denotes the set of students $e_i$ from the gadgets corresponding the elements in $C_j$, and for each student $e_i$, $\mathcal{C}(e_i)$ denotes the set of schools corresponding to the sets which contain~$i$.
Each school has initial capacity one.

\begin{claim}
\label{claim:sumstud-inapprox2}
For any constant $d\ge 1$, if we can find a capacity increase vector $\pr$ with $\lone \le d(k+1)\en$, then we can find a set cover using at most $d(k+1)-1$ sets.
\end{claim}
\begin{proof}
\renewcommand{\qedsymbol}{$\diamond$}
If $d>\m$, then it is easy to find a set cover using at most $d(k+1)-1\ge \m$ sets, as we can include all of the sets. 

From now on, suppose that $d\le \m$.
Let $\pr$ be a good capacity increase vector such that $|\pr |_1\le d(k+1)\en$ and let $\mu$ be a stable matching that is student-popular. For each $c_j$, $d_j$ is matched to it, because they are mutually best for each other. Hence, the original capacity 1 is filled with $d_j$ for each $c_j$. As $\eta =\m(\m +1)\en\ge d(k+1)\en$ (as we know that $d,k\le \m$), no capacity can be increased by more than $\eta$, hence by \cref{ex:studpop}, each $e_i$ must be matched to a $c_j$ school in any student-popular stable matching. 
Let $l$ denote the number of schools, where there is a student $e_i$ assigned. As for each $c_j$ all of $u_j^1,\dots,u_j^{\en}$ is better than any $e_i$ student, for each such school, all of their corresponding $u_j^{\ell}$ students must be assigned to them. As $\mu$ assings all $e_i$ students to set schools, we get that the capacity increase is at least $\en l+\en$. As $\en l+\en \le d(k+1)\en$, we get that $l\le d(k+1)-1$. As all element agents are matched in $\mu$, the $l$ sets that corresponds to these schools form a set cover.
\end{proof}
\begin{claim}
\label{claim:optinI'2}
    For the optimal capacity increase vector $\pr$ of $I'$, we have that $\lone \le (k+1)\en$.
\end{claim}
\begin{proof}
\renewcommand{\qedsymbol}{$\diamond$}
Let $C_{j_1},\dots,C_{j_{k} }$ be an optimal an optimal set cover. Define $\pr$ as follows: $\pr [w_j^{\ell}]=0$ for all $j\in [\m],\ell \in [\en]$. If $j\in \{ j_1,\dots,j_{k}\}$, $\pr [c_j]=\en+p$, where $p$ is the number of elements $i$ such that $C_j$ is the set with smallest index among $\{ j_1,\dots,j_{k} \}$ that covers $i$. Finally, $\pr [c_j]=0$ otherwise. Clearly, $|\pr |_1 \le k\en +\en$. 

    We claim that the matching $\mu$, given by the edges $\{ (d_j,c_j) \mid j\in [\m]\} \cup \{ (u_j^{\ell},w_j^{\ell}) \mid j\in [\m]\setminus \{ j_1,\dots ,j_{k}\} ,\ell \in [\en] \} \cup \{ (u_j^{\ell},c_j), (e_i,c_{l_i}) \mid j\in \{ j_1,\dots,j_{k}\} ,i,\ell \in [\en] \}$ together with the only stable matchings inside the $G^i$ gadgets without $e_i$ is stable and student-popular, where $l_i$ is the smallest index among $\{ j_1,\dots,j_{k}\}$ such that $i\in C_{l_i}$. $\mu$ is clearly feasible for $\quota +\pr$.

    Suppose there is a blocking pair $(x,y)$ to $\mu$. Student $x$ cannot be $d_j$, as $d_j$ is at her best choice. She also cannot be $u_j^{\ell}$, because for each $j\in [\m]$, either $u_j$ is at her best choice $c_j$ if $j\in \{ j_1,\dots,j_{k}\}$ or her only better school $c_j$ have capacity 1 in $\quota +\pr$, and is filled with a better student $d_j$. Student $x$ cannot be $e_i$, because again either $e_i$ is at her best school, or her better schools all have capacity 1 and filled with a better student $d_j$. This is because $C_{j_1},\dots,C_{j_{k}}$ was a set cover, and for each $e_i$ we matched her the to best school ( which is the one with the smallest index) who increased its capacity. Finally, $x$ cannot be any other student from a gadget $G^i$, because $\mu$ is stable and student-popular inside the gadgets by \cref{ex:studpop}. Hence, there is no possible choice for a blocking student $x$, contradiction. 
    To show that $\mu$ is student-popular observe that in each of the above cases for $x$ either $x$ cannot improve, or could only improve by replacing a student from her best place, so the number of improving students cannot be more than the number of disimproving ones in any matching. 
  \end{proof}

Now suppose we have a polynomial-time $d$-approximation algorithm for \sumstud, for some constant $d$. Then by \cref{claim:optinI'2}, we can find a good capacity increase vector $\pr$, with $|\pr |_1\le d(k+1)\en$.
From \cref{claim:sumstud-inapprox2}, we get that we can also find a set cover using at most $d(k+1)-1$ sets. As $\frac{d(k+1)-1}{k}\le 2d$, this  algorithm gives us a polynomial $2d$-approximation for the set cover problem, which is a contradiction, if $P\ne NP$.

As in this  reduction $|\Unas |=0$, it follows that there is no $|\Unas |$-approximation too, and also that the problem is hard even with $|\Unas |=0$.
\end{proof}

\begin{remark}
 By using \cref{ex:studpop} in the place of the edge selector gadget in the proof of \cref{sum-se:w1h-cap}, we can also show that \sumstud\ is W[1]-hard with respect to the capacity bound $\sumcap$.
\end{remark}

\subsection{Minimax capacities}

Finally, we show that \maxstud\ is solvable in polynomial time. 

We start by introducing some notations.  
Recall that we call student $u$ a \myemph{admirer} of school $w$, if $w$ is the best school for $u$.

First recall the following theorem from Malove and Sng \cite{manlove2006popular} (Theorem 1) about (one-sided) popular matchings. (There, it was stated a little differently in the context of the capacitated house allocation problem, we state its equivalent version for our case.)
\begin{theorem}{\cite{manlove2006popular}}
\label{thm:manlove-orig}
    A matching $\mu$ is student-popular if and only if: 
    \begin{enumerate}
        \item \label{manl:1} For every school $w$, that has at most $\quota [w]$ admirers, each admirer is matched to it,
        \item \label{manl:2} For every schools $w$, if it has more than $\quota [w]$ admirers, then it is filled with admirers
        \item \label{manl:3} Every student is matched to either her best school, or the best school among the ones that cannot be filled with admirers only (we allow this school to be $\emptyset$).
    \end{enumerate}
\end{theorem}

We start by providing a little simpler characterization.

\begin{theorem}
\label{thm:malove}
A matching $\mu$ is student-popular if and only if each school $w$ that can be filled with admirers only is filled with admirers of $w$ only and for each student $u$, every school better than $\mu (u)$ for $u$ is filled with admirers only.
\end{theorem}
\begin{proof}
On one hand, it is easy to see that if the conditions of \cref{thm:manlove-orig} hold, than the conditions of \cref{thm:malove} also hold. Indeed, by \ref{manl:1} and \ref{manl:2} each school that can be filled with admirers only must be filled with admirers only, and by \ref{manl:3}, if a student is not at her best school, then she must be at the best such school that cannot be filled with admirers, so all better schools must be filled with admirers.

On the other direction, suppose that the conditions of \cref{thm:malove} hold for a matching $\mu$. Then, \ref{manl:2} also holds trivially. Suppose \ref{manl:1} does not hold. Then, there is a school $w$, who is not filled with admirers, but there is an admirer $u$ who is not there. But then for $u$, there is a school better than $\mu (u)$, that is not filled with admirers, contradiction. Finally, suppose that \ref{manl:3} does not hold. By the second property, no student $u$ can be at a worse school than the best one that cannot be filled with admirers only. Suppose $u$ is at a strictly better, but not her best school $w$. Then, $w$ can be filled with admirers, but $u$ is there and not an admirer, so there is an admirer $u'$ of $w$ not there in $\mu$. However, this contradicts the first assumption for $u'$ and $\mu (u')$.

Hence we conclude that the two characterizations are equivalent.
\end{proof}

Using \cref{thm:malove}, we can show the following.

\begin{lemma}
\label{lemma:studpopstable}
An instance $I=\instance$ of \manyones\ admits a stable and student-popular matching if and only if the student proposing Gale-Shapley algorithm, which starts by all student proposing to their best school simultaneously, then the rejected students start to propose one by one; terminates without the following two cases ever happening after the first simulataneous proposal:
\begin{enumerate}
    
    \item A student $u$ gets rejected by a school $w$ that is not filled with admirers of $w$ only
    \item A student $u'$ gets dropped from a school $w$ because of a new proposal.
\end{enumerate}
\end{lemma}
\begin{proof}
First suppose that the above Gale-Shapley algorithm terminates without the two cases ever happening. Then, each student who was at her best school after the first step remains there, and each student who got rejected from their first school gets accepted to the best school that is not filled up with admirers only. Also, if a school can be filled with admirers only, then it fills up in the first step and has the same set of students in the end, so it is filled with admirers only. Hence, by \cref{thm:malove}, the student-optimal stable matching output by the algorithm is student-popular.

In the other direction, suppose that the student-optimal stable matching is student-popular, but one of the cases happen during the algorithm. If case 1 happens, then there is a student $u$ and a school $w$, such that $u$ is at a worse school than $w$ at the end, and $w$ is not filled with admirers only. By \cref{thm:malove}, this  means that the output, which is the student-optimal stable matching is not student-popular, contradiction. If case 2 happens, then at the end of the algotihm, $u'$ is at a worse place than $w$, while there is at least one student $u''$ at $w$, who is not an admirer of $w$. Hence, by \cref{thm:malove} again, the student-optimal stable matching is not student-popular, contradiction again.
\end{proof}

\paragraph{An algorithm for}\maxstud.
Consider the following algorithm. Suppose we are given a number $\maxcap$. Initially, set each capacity to $\quota [w]+\maxcap.$

\newcommand{\best}{b}
Let $\best (u)$ denote the best school of student $u$.

\textit{Phase 1:}
Let each student $u$ propose to $\best (u)$. If no school receives more proposals than its capacity, then we stop and output this  matching with the current capacity increase vector. Otherwise, the schools reject their worst students, such that they remain under their capacity. 

\textit{Phase 2:} 
The unmatched students of phase 1 start to propose in the order of their preference lists, one by one. If at any point the current student $u$ either would get rejected by a school $w$ that is not filled with admirers of $w$ only or a student $u'$ would be dropped from a school $w$ because of student $u$'s proposal, do the following. Decrease the capacity of $w$ to the number of admirers of $w$ currently at $w$ that satisfy that they are better than $u$ and also better than any student $u'$ at $w$, who is not an admirer of $w$. If the capacity gets smaller than $\quota [w]$, then we stop and output NO, there is no student-popular stable matching. Otherwise, the schools reject their worst students and we continue the proposals with the additional rejected students. If the proposal phase terminates without a school decreasing its capacity beyond its original quota, then we stop and output the obtained matching $\mu$ with the current capacity increase vector.

See the pseudocode of the algorithm in \cref{alg:maxstud}.

\begin{algorithm}
\caption{Algorithm for \maxstud}
\label{alg:maxstud}
\begin{algorithmic}
\STATE \textbf{Input}: An instance $I$ of \maxstud.
\STATE Set $\quota' [w] = \quota [w] +\maxcap$ for all $w\in W$
\STATE Set $\mu (u):= \best (u)$ for all $u\in U$
\IF{$\mu$ is feasible for $\quota '$}
\STATE \textbf{Return} $\mu$, $\pr \equiv \maxcap$
\ELSE
\STATE Each overfilled shool $w$ rejects the worst $|\mu (w)| -\quota' [w]$ of their students
\STATE Update $\mu$
\ENDIF

\WHILE{There is an unmatched student $u$}
\STATE $u$ proposes to the next school $w$ in her preference list
\IF{a student $u'$ gets rejected from $w$ because of $u$ OR $w$ is not filled with admirers but rejects $u$}
\STATE $L:=\{ u\} \cup \{ u''\in \mu (w)\mid \best (u'')\ne w\} $ (non-admirers of $w$ currently at $w$). 
\STATE Decrease $\quota '[w]$ to $|\{ u'\in \mu (w) \mid u'\succ u'' \; \forall u''\in L\} | $
\STATE $w$ rejects the worst $|\mu (w)|-\quota '[w]$ students
\STATE Update $\mu$
\IF{$\quota'[w]<\quota [w]$}
\STATE \textbf{Return} "No solution"
\ENDIF
\ENDIF
\ENDWHILE
\STATE \textbf{Return $\mu$, $\pr = \quota'-\quota$}

\end{algorithmic}

\end{algorithm}

\begin{theorem}
\label{thm:maxstud}
\maxstud\ can be solved in polynomial-time.  
\end{theorem}
\begin{proof}
We claim that \cref{alg:maxstud} outputs solution to an instance $I=(U,W,(\succ_x)_{x\in U\cup W},\quota,\maxcap)$ of \maxstud, whenever there exists a good capacity increase vector with $|r|_{\infty}\le \maxcap$. 

Suppose that the algorithm outputs a capacity increase vector $\pr$, and a matching $\mu$, which is clearly the student-optimal stable matching with respect to the capacity vector $\quota +\pr$, as it is obtained by running the student proposal Gale-Shapley algorithm with capacities $\quota + \pr$. If the algorithm terminated and output a matching, then that means that with these capacities, neither of the cases of \cref{lemma:studpopstable} happens, so by \cref{lemma:studpopstable}, the output matching is also student-popular. 

Now suppose that the algorithm terminates because a school's capacity got too small. We prove that there can be no student-popular stable matching with respect to any capacity increase vector with $|\pr |_{\infty} \le \maxcap$ (in fact, we show this for any capacity \textit{change} vectors too).

\begin{claim}
\label{claim:maxstud-induction}
    After each decrease in the algorithm, there can be no good capacity change vector with $|\pr |_{\infty}\le \maxcap$, such that any of the schools have larger capacity in $\quota +\pr$ than what they currently have in the algorithm.
\end{claim}
\begin{proof}
\renewcommand{\qedsymbol}{$\diamond$}

We show this by induction on the number of capacity decreases. Before the first decrease it is clearly true, as each school is at the maximum capacity they can be under the condition that $|\pr |_{\infty}\le \maxcap$. So suppose that we know the statement holds after the first $k$ decreases, and suppose there is a new capacity decrease.

Assume that it is because some proposing student $u$ would get rejected from a school $w$ that is not filled up with admirers. Then, the algorithm decreased $w$'s capacity to the number of admirers there who were better than both $u$ and any non-admirer student there. By induction, we know that there is no good capacity increase vector, where a school $w'\ne w$ has larger capacity, then what it has after the $k+1$-th decrease, because it remains the same. Suppose that there is one solution, where $w$'s capacity is larger than the one obtained after the decrease, but at most the one it had before the decrease (by induction it cannot be more). As the other schools' capacity is at most as large, it must hold that $u$ and all students who were at $w$ still propose to $w$ with respect to these capacities too by \cref{lem:capincrease}. Hence, there must be a non-admirer student there, because there is a non-admirer proposer such that $w$ cannot be filled with better admirer students. As $u$ got rejected before, she still has to be rejected now, because each student now at $w$ was already at $w$ before and was better than $u$. Hence, $u$ gets rejected by a school with a non-admirer student there, so by \cref{lemma:studpopstable}, there is no stable and student-popular matching with respect to these capacities, contradiction. 

Next suppose that a student $u'$ would get dropped from a school $w$, because of a proposal from a student $u$. By a similar reasoning, it is enough to show that $w$'s capacity $\quota' [w]$ cannot be larger with respect to any good capacity increase vector, then after the decrease. However, if there would be such a good capacity increase vector (where the other schools' capacities are at most as large by induction), then $u$ (and everyone in the current $\mu (w)$ before the decrease) will still propose to $w$ by \cref{lem:capincrease}. If $u'$ is there, then she gets rejected from $w$ which is a contradiction by \cref{lemma:studpopstable}. Otherwise, either there remains a student who is worse than $u$ for $w$, who gets rejected when $u$ comes, or $w$ is full with better students when $u$ proposes, so $u$ gets rejected. Also, in the second case, there must be a non-admirer student $u''$ at $w$ too, because by our assumption on $\quota [w]$, there must be a non-admirer in the top $\quota '[w]$ proposers to $w$. Hence, both these cases lead to a contradiction by \cref{lemma:studpopstable}.
\end{proof}

By \cref{claim:maxstud-induction} if a school's capacity is decreased below its original quota, then there is no nonnegative good capacity increase vector with $|\pr |_{\infty}\le \maxcap$.

\end{proof}

\begin{corollary}
The optimization version of \maxstud\ can be solved in time $\mathcal{O}(|E|\cdot |U|\cdot \log (|U|)$, where $|E|$ is the number of applications.
\end{corollary}
\begin{proof}
It is clear that for the optimal $\maxcap$ it holds that $0\le \maxcap \le |U|$. Hence, with \cref{alg:maxstud}, we can find the optimal such $\maxcap$ with binary search in $\log |U|$ iterations. Also, in each iteration, there is at most proposal along each edge in the acceptability graph and when the algorithm decreases a capacity, the decrease value can be easily computed in $\mathcal{O}(|U|)$ time, by traversing the preference list of the school.
\end{proof}

\section{Allowing decreases}\label{sec:decr}

In previous Sections we only allowed schools to increase their capacities. However, when one aims for a stable and efficient matching, it may be beneficial to allow the capacities to decrease too. For example, if one considers a clause gadget from \cref{thm:maxse}, then by decreasing $w_1$'s capacity by one, we can obtain a stable and efficient matching by matching $u_1$ to $w_2$, $u_2$ to $w_3$ and each $d_i^j$ to $s_i^j$. But as we have seen, if we allow only increases to be made, then the optimal capacity increase is at least $\eta$ (where $\eta$ can be arbitrary) both for the sum and the maximum of the changes. 

For this reason, we explore whether allowing the capacities to decrease also changes the computational complexity of the related problems.

\subsection{Stable and perfect matchings}

In the case of stable and perfect matchings, by \cref{lem:capincrease} we know that decreasing a school's capacity can never increase the size of a stable matching. Hence, in this model all results for \sumsta\ and \maxsta\ immediately transfer over. 

\subsection{Stable and efficient matchings }\label{sec:SE-decr}

 In this section we consider \sumse\ and \maxse\ in the model, where decreases are also allowed. As we have seen, here allowing decreases may indeed decrease the optimum value by a significant amount. 
 
 We show that both \maxse\ and \sumse\ remain NP-hard and inapproximable in this  case.

\begin{theorem}\label{thm:maxsedecr}
 \maxse\ and \sumse\ are NP-hard to approximate within $\mathcal{O}((|E|+|U|)^{1-\varepsilon})$ for any $\varepsilon >0$, even if we are allowed to make capacity decreases too.
\end{theorem}
\begin{proof}
The proof follows a similar construction as the proof of \cref{thm:maxse}.

Let $\eta \ge 3$ be a positive integer number specified later.

\smallskip
\noindent\textbf{Clause gadget.} We have 3 students $u_1,u_2,c$ and 3 schools $w_1,w_2,w_3$ along with $4\eta$ dummy students $d_1^1,\dots,d_1^{\eta},d_2^1,\dots,d_2^{\eta}$, $e_1^1,\dots,e_1^{\eta},e_2^1,\dots,e_2^{\eta}$ and $4\eta$ dummy schools $s_1^1,\dots,s_1^{\eta},s_2^1,\dots,s_2^{\eta}$, $t_1^1,\dots,t_1^{\eta},t_2^1,\dots,t_2^{\eta}$. The preferences are: 

  {
    \centering\begin{tabular}{ll|ll}
    $u_1:$ & $w_1\succ w_2$ &  $w_1:$  & $c\succ e_1^1\succ \cdots \succ e_1^{\eta} \succ u_1\succ d_1^1\succ \cdots \succ d_1^{\eta}\succ  u_2$ \\
    $u_2:$ & $w_1\succ w_2\succ w_3$ & $w_2:$ & $u_1\succ u_2\succ e_1^1 \succ \cdots \succ e_1^{\eta} \succ d_2^1\succ \cdots \succ d_2^{\eta} \succ c$ \\
    $c:$ & $w_2\succ w_1$ & $w_3:$ & $u_2$ \\
    $d_i^j:$ & $w_i\succ s_i^j$ & $s_i^j:$ & $d_i^j$ \\
    $e_i^j:$ & $w_i\succ t_i^j$ & $t_i^j:$ & $e_i^j$,\\
\end{tabular}
\par}

for $i\in [2], j\in [\eta ]$.
The initial capacities of all schools are one, except that $w_1$ and $w_2$ has capacity $\eta +1$. The crucial property of this  gadget is that if each school can change its capacity by at most $\eta$ , then in the student-optimal stable matching $c$ is matched to $w_1$ and either $u_1$ is matched to $w_2$ (if $w_1$ has capacity at most $\eta +1$) or $u_2$ is matched to $w_2$ (if $w_1$ has capacity more than $\eta +1$). In either case, the resulting matching is not efficient, as student $c$ would mutually like to switch places with $u_1$ or $u_2$ (the one at $w_2$).

However, if student $c$ would be matched out from the gadget to a better school, than there is a stable and efficient matching without any capacity changes needed, $\mu =\{ (u_1,w_1),(u_2,w_2),(e_i^j,w_i)$, $(d_i^j,s_i^j)\mid i\in [2],j\in [\eta  ]\}$.

\smallskip
\noindent\textbf{Variable gadget.}
We have 2 schools $x,\overline{x}$ with 8 students $T$, $F$, $c_1,c_2,\overline{c}_1,\overline{c}_2,y,\overline{y}$ along with $2\eta$ dummy students $f_1,\dots,f_{\eta},g_1,\dots,g_{\eta}$ and $2\eta$ dummy schools $p_1,\dots,p_{\eta},q_1,\dots,q_{\eta}$. The preferences are:

{
    \centering\begin{tabular}{ll|ll}
    $T:$ & $x\succ \overline{x} $ &  $x:$  & $y\succ F\succ c_1\succ c_2\succ f_1\succ \cdots \succ f_{\eta} \succ T$ \\
    $F:$ & $\overline{x} \succ x $ & $\overline{x}:$ & $\overline{y}\succ T\succ \overline{c}_1\succ \overline{c}_2\succ g_1 \succ \cdots \succ g_{\eta} \succ F$ \\
    $f_i:$ & $x\succ p_i$ & $p_i:$ & $f_i$ \\
    $g_i:$ & $\overline{x}\succ q_i$ & $q_i:$ & $g_i$, \\
    $c_1,c_2,y:$ & $x$ &  & \\
    $\overline{c}_1,\overline{c}_2,\overline{y}:$ &$\overline{x}$ & &\\

\end{tabular}
\par}
The initial capacity is 1 for all schools.
The important property of this gadget is that if both $x$ and $\overline{x}$ have a $c^j$ or $\overline{c}^j$ student, then there is no stable and efficient matching with at most $\eta$ capacity change. This holds, because if each school can only change its capacity by at most $\eta$ and both $x$ and $\overline{x}$ have capacity at least two, (and they have capacity at most $\eta +1$), then $T$ must be at $\overline{x}$ and $F$ must be at $x$ in the student-optimal stable matching, hence they would mutually like to switch schools.

However, if one of $x,\overline{x}$, (say $\overline{x}$) has capacity 0, then there is a stable and efficient matching $\mu$, where one or two of $c_1,c_2$ can be at $x$, $x$ has capacity $2+|\{ c_1,c_2\} \cap \mu (x)|$ and no other school changes capacity, namely $\mu =\{ (y,x), (F,x),(c_j,x),(f_i,p_i),(g_i,q_i)\mid i\in [\eta] \}$, where $j\in \{ 1\}$, $\{ 2\}$ or $\{ 1,2\}$, depending on which students from $\{ c_1,c_2\} $ get matched to $x$.

\smallskip
\noindent \textbf{The reduction.}
Now we provide our reduction from \rrsat.

Let $I=\varphi$ be an instance of \rrsat, with clauses $C_1,\dots,C_{\hat{m}}$ and variables $X_1,\dots,X_{\hat{n}}$. Note that $3\hat{m}=4\hat{n}$. We create an instance $I'$ of \maxse\ and an instance $I''$ of \sumse\ as follows.

\begin{compactitem}[--]
    \item For each clause $C_j$, $j\in [\hat{m}]$, we create a clause gadget $G^j$. Let us denote the student $c$ in gadget $G^j$ by $c^j$ and let us call these students the \textit{clause students}.
    \item For each variable $X_i$, we create a variable gadget $H^i$. Let us call the $T$ and $F$ students in $H^i$ by $T^i$ and $F^i$ and the schools $x$ and $\overline{x}$ by $x^i$ and $\overline{x}^i$. Let us call these students and schools the \textit{boolean students} and \textit{literal schools} respectively.
\end{compactitem}
As $3\hat{m}=4\hat{n}$, we have $\mathcal{O}(\eta \hat{n})$ students and schools.
 Finally, we identity the clause student $c^j$ from the clause gadget $G^j$ with one of the students $c^i_1,c^i_2,\overline{c}^i_1,\overline{c}^i_2$ from a variable gadget $H^i$, whenever $X_i$ or $\overline{X}_i$ appears in $C_j$. We identify it with $c^i_1$, if this is the first appearence of $X_i$, with $c^i_2$, if it is the second appearence of $X_i$, with $\overline{c}^i_1$, if it is the first appearence of $\overline{X}_i$ and with $\overline{c}^i_2$, if it is the second appearence of $\overline{X}_i$. (So for example if $C_1=(X_1\vee X_2\vee \overline{X_3} )$, and $X_1,\overline{X_3}$ are first appearences, and $X_2$ is a second appearence, then we identify the students $c^1$ with $c_1^1$, $c_2^2$ and $\overline{c}_1^3$).
 We extend the preferences of these students, such that they rank their adjacent $x^i$ or $\overline{x}^i$ schools first in the order the corresponding literals appear in $C_j$, and the schools from their clause gadget after, in their original order.

\begin{claim}
\label{claim:maxsedecr1}
If there exists a satisfying assignment, then there is a good capacity change vector $\pr$ with $\lone \le 2\hat{n}+ \hat{m}$ and $\lmax \le 3$.
\end{claim}
\begin{proof}
\renewcommand{\qedsymbol}{$\diamond$}
Let $f$ be a satisfying assignment. Let $\mu$ be the following matching.
If $X_i$ is set to True, then we match $F^i$ and $y_i$ to school $x^i$ and noone to $\overline{x}_i$ (it will have 0 capacity), otherwise we match $\overline{y}_i$ and $T^i$ to school $\overline{x}^i$ and noone to $x_i$. Then, for each clause $C_j$, let $\ell_{i_j}$ be the first literal in $C_j$ that is true and match student $c^j$ from $G^j$ to the literal school of $\ell_{i_j}$. Finally match $u_1$ to $w_1$, $u_2$ to $w_2$ in each clause gadget $G^i$ and match each dummy student to her dummy school, except that the students $e_j^1,\dots,e_j^{\eta}$ are matched to $w_j$ for $j\in \{ 1,2\}$ in each clause gadget $G^i$. Next, we describe $\pr$. For each $i\in [\hat{n}]$, if $X_i$ is set to True, then we define $\pr [x_i]$ to be one plus the number of clause students matched there and $\pr [\overline{x}_i]=-1$, otherwise we let $\pr [x_i]=-1$ and $\pr [\overline{x}_i]$ to be one plus the number of clause students matched there. For all other schools $s$, $\pr [s]=0$. Clearly, $|\pr|_1=2\hat{n} +\hat{m}$ and $\lmax \le 3$, because each literal school has at most two clause students. 

It is straightforward to verify that $\mu$ is feasible for $\quota +\pr$.

As $f$ was a satisfying assignment, for each $j\in [\hat{m}]$, $c^j$ is matched outside of the gadget $G_1^j$ to a better school, hence as we have seen before, the rest of the students inside are matched in a stable and efficient way. As $f$ is consistent, for each $i\in [\hat{n}]$, exactly one of $x^i$ and $\overline{x}^i$ has nonzero capacity. 

To see that $\mu $ is stable it is straighforward to verify that for each student who is not at her best school, all better schools are filled with better students or have zero capacity in $\quota +\pr$. 

To see that $\mu$ is efficient, observe that each school $x^i,\overline{x}^i$ either has zero capacity, or every student there is at her best school with nonzero capacity. The dummy schools are not envied by anyone. Finally, each school $w_1,w_2,w_3$ in a gadget $G^i$ satisfies that all students there are at their best school with nonzero capacity, except the $u_2$ students, who envy $w_1$. However, for a $u_2$ student to improve, she would have to replace someone from $w_1$ and the replaced student could not improve, as she is at her best school. Hence, no student can improve by going to a different school, without making someone worse off. It follows that $\mu $ is efficient.
\end{proof}

\begin{claim}
\label{claim:maxsedecr2}
If there is a good capacity change vector $\pr$ with $|\pr|_{\infty}\le \eta$ or $|\pr |_1\le \eta$, then there is a satisfying assignment.
\end{claim}

  \begin{proof}
\renewcommand{\qedsymbol}{$\diamond$}
Suppose there is a good capacity change vector $\pr$ with $\lmax\le \eta$ and a stable and efficient matching $\mu$ with respect to capacities $\quota +\pr$. Then, it must hold that for each gadget $G^j$, $c^j$ is matched to a literal school, as otherwise there is no stable and efficient matchings inside $G^j$ with capacity change at most $\eta$. Furthermore, for each $i\in [\hat{n}]$ at most one of $x^i$,$\overline{x}^i$ at most one has a $c^j$ clause student, because otherwise there would be no stable and efficient matching with at most $\eta$ increase, as $T^i$ and $F^i$ would envy each other.
This implies, that if we define a truth assignment $f$ such that $X_i$ is set to True, if $x^i$ has a clause student $c^j$ in $\mu$ and it is set to False otherwise, then this  assignment is consistent, and for each clause, there is at least one true literal inside. 

Now, if there is a good capacity increase vector with $|\pr |_1\le \eta$, then $\lmax\le \eta$ and the same argument shows the existence of a satisfying assignment $f$.
\end{proof}

If we let $\eta =3$, then it proves that \maxse\ is NP-complete, even if the preference and priority list lengths are bounded by constant and decreases are allowed. 

To see inapproximability and hardness of \sumse\ too, let $\eta =(2\hat{n}+\hat{m})^c$, for some arbitrary constant $c>0$. Clearly, $\eta$ is polynomial in the input size, hence the reduction is still polynomial. Therefore, by \cref{claim:maxsedecr1,claim:maxsedecr2} it follows that it is NP-hard to decide whether the optimum is at most $2\hat{n}+\hat{m}=\mathcal{O}(\hat{n})$ or at least $\eta =(2\hat{n}+\hat{m})^c=\mathcal{O}(\hat{n}^c)$ for both \maxse\ and \sumse\ for any $c>0$. As in $I'$ and $I''$ we have that $|U|,|E|=\mathcal{O}(\eta \hat{n})=\mathcal{O}(\hat{n}^{c+1})$, this  implies that there is no polynomial time $\mathcal{O}((|U|+|E|)^{\frac{c-1}{c+1}})$-approximation algorithm for neither \sumse\ nor \maxse\ if $P\ne NP$. As $c$ can be an arbitrarily large constant, and $\frac{c-1}{c+1}$ tends to $1$, as $c$ goes to $\infty$, this  is equivalent to saying that there is no $O((|U|+|E|)^{1-\varepsilon})$-approximation algorithm, for any $\varepsilon >0$. Also, as increasing $\eta$ does not increase the length of the students' preference lists, this  inapproximability result holds even if each preference list is at most 5 long (the clause students have the longest preferences, which are 5 long).
\end{proof}

\subsection{Stable and student-popular matchings with decreases allowed}\label{sec:STUPOP-decr}

Similarly as with stable and efficient matchings, it can be beneficial to allow decreases in the capacities to obtain stable and student-popular matchings too. In \cref{ex:studpop}, if we decrease $w_1$'s capacity by 1 to 0, then the matching $\mu =\{ (e,w_2),(u_2,w_2),(d_i^j,s_i^j)\mid i\in [2],j\in [\eta ]\}$ is stable and student-popular, whereas by only increases at least $\eta$ change was necessary (where $\eta$ could be chosen arbitrarily). Hence, by allowing decreases, the optimum value can be $\mathcal{O}(|U|)$ times less.

We extend both the hardness result for \sumstud\ and the algorithm for \maxstud\ for this case. 

\begin{theorem}
\label{thm:studpop-sum-decr}
\sumstud\ is NP-hard and admits no polynomial-time constant or $|\Unas |$-approximation algorithm, even if capacity decreases are allowed.
\end{theorem}
\begin{proof} 
We reduce from the NP-hard \scov\ problem. Let $d$ be any constant.
Let $I$ be an instance of \scov\ with $OPT(I)=k\le \hat{m}$, where $\hat{m}$ is the number of sets. We create an instance $I'$ of \sumstud\ as follows.
\paragraph{The element gadget} First we describe the element gadgets we are going to use.
Let $\eta$ be a positive integer. 
We have $3+4\eta$ students $u_1,u_2,e,d_1^1,\dots, d_1^{2\eta },d_2^1,\dots,d_2^{2\eta}$ and $3+4\eta$ schools: $w_1,w_2$ with capacity $\eta +1$ and $w_3,s_1^1,\dots,s_1^{2\eta},s_2^1,\dots,s_2^{2\eta}$ with capacity 1. The preferences are
\begin{center}
\begin{tabular}{ll|ll}
    $u_1:$ & $w_1$ &  $w_1:$  & $u_1\succ d_1^1\succ \cdots \succ d_1^{2\eta} \succ e$ \\
    $e:$ & $w_1\succ w_2$ & $w_2:$ & $e  \succ u_2\succ d_2^1\succ \cdots \succ d_2^{2\eta}$ \\
    $u_2:$ & $w_2\succ w_3$ & $w_3:$ & $u_2$ \\
    $d_i^j:$ & $w_i\succ s_i^j$ & $s_i^j:$ & $d_i^j$ \\
   \\

\end{tabular}
\end{center}
If we are only allowed to change the capacities by at most $\eta$, then in the student-optimal stable matching, $e$ is at $w_2$ and $d_2^{2\eta}$ is not at $w_2$. Hence, $d_2^{2\eta}$ could get $e$'s place and improve, while $e$ could go to $w_1$ in $u_1$'s place and also improve, while only $u_1$ disimproves, meaning that the matching is not student-popular.  
However, if $e$ would be assigned outside to a better school, then the matching $\mu =\{ (u_1,w_1),(u_2,w_2),(d_i^j,w_i),(d_i^l,s_i^l)\mid i\in [2],j\in [\eta ],l\in \{ \eta +1,\dots,2\eta \} \}$ is stable and student-popular as any student could only improve by replacing out a student who is at her first place.

 Now we proceed with describing the reduction.
\paragraph{The students and schools}
\begin{itemize}

\item[--] For each set $C_j$, $j\in [\hat{m}]$ we create a set school $c_j$ with capacity 1, along with a dummy student $d_j$.
\item[--]  For each element $i\in [\hat{n}]$, we add an element gadget $G^i$ with $\eta =\hat{m}(\hat{m}+1)\hat{n}$. We call $e^i$ the student in $G^i$ who corresponds to $e$.
\item[--] For each $j\in [\hat{m}]$, we add $\hat{n}$ students $u_j^1,\dots,u_j^{\hat{n}}$ along with schools $w_j^1,\dots,w_j^{\hat{n}}$ with capacity 1.

\end{itemize}
\paragraph{The preferences and priorities}
  The preferences and priorities are defined as follows:

  \begin{tabular}{l@{\quad}|@{\quad}l}
  \centering
      Students & Schools\\
       $e^i \colon \seqq{\mathcal{C}( e^i)} \succ w_1^i \succ w_2^i$ &  $c_j \colon d_j\succ \seqq{U_j}\succ \seqq{\mathcal{E}(C_j)}$ \\
    $u_j^{\ell} \colon c_j\succ w_j^{\ell}$ & $w_j^{\ell} \colon u_j^{\ell}$ \\
    $d_j \colon c_j$, &
    \end{tabular}

   \noindent where for each set $C_j$, $\mathcal{E}(C_j)$ denotes the set of students $ e^i$ from the gadgets corresponding the elements in $C_j$, and for each student $ e^i$, $\mathcal{C}( e^i)$ denotes the set of schools corresponding to the sets which contain~$i$ and $U_j\coloneqq \{ u_j^1,\dots,u_j^{\hat{n}}\}$.
Each school has initial capacity one.

\begin{claim}
\label{claim:sumstud-inapp-decr1}
For each constant $d\ge 1$, if we can find a good capacity change vector $\pr$ with $|\pr |_1\le d(k+1)\hat{n}$ then we can find a set cover using at most $d(k+1)-1$ sets.
\end{claim}
\begin{proof}
\renewcommand{\qedsymbol}{$\diamond$}
If $d>\hat{m}$, then it is trivial to find a set cover using at most $d(k+1)-1\ge \hat{m}$ sets. 

From now on, suppose that $d\le \hat{m}$.
Let $\pr$ be a good capacity change vector such that $|\pr |_1\le d(k+1)\hat{n}$ and let $\mu$ be a stable matching that is student-popular. For each $c_j$, $d_j$ is matched to it, because they are mutually best for each other, except only when $c_j$ decreases its capacity to 0. As $\eta =\hat{m}(\hat{m}+1)\hat{n}\ge d(k+1)\hat{n}$, no capacity can be increased by more than $\eta$, hence by the observed property of the element gadget, each $e^i$ must be matched to a $c_j$ school in any student-popular stable matching. 
Let $l$ denote the number of $c_j$ schools, where there is a student $e^i$ assigned. As for each $c_j$ all of $u_j^1,\dots,u_j^{\hat{n}}$ are better than any $e^i$ student, for each such school, all of their corresponding $u_j^{\ell}$ students must be assigned to them. As $\mu$ assings all $e^i$ students to set schools, we get that the capacity change is at least $\hat{n}l+\hat{n}$. As $\hat{n}l+\hat{n}\le d(k+1)\hat{n}$, we get that $l\le d(k+1)-1$. As all element agents are matched in $\mu$, the $l$ sets that corresponds to these schools form a set cover.
\end{proof}

\begin{claim}
    We have $OPT(I')\le k\hat{n}+\hat{n}$. 
\end{claim}
\begin{proof}
\renewcommand{\qedsymbol}{$\diamond$}
Let $C_{j_1},\dots,C_{j_{k} }$ be an optimal set cover using $k=OPT(I)$ sets. Define $\pr$ as follows: $\pr [w_j^{\ell}]=0$ for all $j\in [\hat{m}],\ell \in [\hat{n}]$. If $j\in \{ j_1,\dots,j_{k}\}$, $\pr [c_j]=\hat{n}+p$, where $p$ is the number of elements $i$ such that $C_j$ is the set with smallest index among $\{ j_1,\dots,j_{k} \}$ that covers $i$. Finally, $\pr [c_j]=0$ otherwise. Clearly, $|\pr |_1\le k\hat{n}+\hat{n}$. 

    We claim that the matching $\mu$, given by the edges $\{ (d_j,c_j) \mid j\in [\hat{m}]\} \cup \{ (u_j^{\ell},w_j^{\ell}) \mid j\in [\hat{m}]\setminus \{ j_1,\dots ,j_{k}\} ,\ell \in [\hat{n}] \} \cup \{ (u_j^{\ell},c_j), (e^i,c_{l_i}) \mid j\in \{ j_1,\dots,j_{k}\} ,i,\ell \in [\hat{n}] \}$ together with the only stable matchings inside the $G^i$ gadgets without $e^i$ is stable and student-popular, where $l_i$ is the smallest index among $\{ j_1,\dots,j_{k}\}$ such that $i\in C_{l_i}$. $\mu$ is clearly feasible for $\quota +\pr$.

    Suppose there is a blocking pair $(x,y)$ to $\mu$. Student $x$ cannot be $d_j$, as $d_j$ is at her best choice. She also cannot be $u_j^{\ell}$, because for each $j\in [\hat{m}]$, either $u_j$ is at her best choice $c_j$ if $j\in \{ j_1,\dots,j_{k}\}$ or her only better school $c_j$ have capacity 1 in $\quota +\pr$, and is filled with a better student $d_j$. She cannot be $ e^i$, because again either $ e^i$ is at her best school, or her better schools all have capacity 1 and filled with a better student $d_j$. This  is because $C_{j_1},\dots,C_{j_{k}}$ was a set cover, and for each $ e^i$ we matched her the to best school ( which is the one with smallest index) who increased its capacity. Finally, $x$ cannot be any other student from a gadget $G^i$, because $\mu$ is stable and student-popular inside the gadgets. Hence, there is no possible choice for a blocking student $x$, contradiction. 
    To show that $\mu$ is student-popular observe that in each of the above cases for $x$ either $x$ cannot improve, or could only improve by replacing a student from her best place, so the number of improving students cannot be more than the number of disimproving ones in any matching. 
  \end{proof}

Now suppose we have a polynomial-time $d$-approximation algorithm for \sumstud, for some constant $d$. Then, we can decide whether there is a good capacity change vector $\pr$, with $|\pr |_1\le d(k+1)\hat{n}$.
From \cref{claim:sumstud-inapp-decr1}, we get that we can also decide whether there is a set cover using at most $d(k+1)-1$ sets. As $\frac{d(k+1)-1}{k}\le 2d$, this  algorithm gives us a polynomial $2d$-approximation for the set cover problem, which is a contradiction, if $P\ne NP$.

As in this reduction $|\Unas |=0$, it follows that there is no $|\Unas |$-approximation either, and also that the problem is hard even with $|\Unas |=0$.
\end{proof}

Finally, we show that the \cref{alg:maxstud} for \maxstud\ easily extends to the case when decreases are allowed too. We only change the stopping criteria for a NO answer of the algorithm, to only stop and output NO, if a school's capacity would be decreased below $\quota [w]-\maxcap$ (instead of $\quota [w]$).

\begin{theorem}
    \maxstud\ can be solved in polynomial time, even if decreases are allowed. 
\end{theorem}
\begin{proof}
The proof follows again from \cref{claim:maxstud-induction}.
\end{proof}

\section{Future Work}\label{sec:conclude}

For future work, one could investigate parameterized complexity for other parameters such as the number~$m$ of schools.
For constant number~$m$ of schools, it is easy to see that all problems can be done in polynomial time by guessing for each school the capacity increase and checking whether there exists a solution for the guessed capacities.
Secondly, one could look at stable matching with maximum cardinality instead of perfectness. It would be interesting to see whether the algorithmic results for \sumsta\ and \sumsta\ transfer to this  case. 

Finally, 
one could look at other objectives such as stable and popular matching (not student-popular like in this paper).

\section{Acknowledgements}
Gergely Csáji  acknowledges the financial support by the Hungarian Scientific Research Fund, OTKA, Grant No. K143858, by the Ministry of Culture and Innovation of Hungary from the National Research, Development and Innovation fund, financed under the KDP-2023 funding scheme (grant number C2258525) and  by the Hungarian Academy of Sciences, Momentum Grant No. LP2021-1/2021. Jiehua Chen has been funded by the Vienna Science and Technology Fund (WWTF) [10.47379/VRG18012].

\bibliographystyle{abbrv} 

\bibliography{main}  
\iflong
\clearpage

\begin{table}[t!]
  \centering
  \LARGE \textbf{\appendixtitle}
\end{table}
\bigskip

\appendix

\appendixtext

\fi
\clearpage

\end{document}
